\pgfplotsset{compat=1.8}
\tikzset{
    thickest/.style={line width=3pt},
    empty/.style={decoration={markings,
    mark=at position #1 with {\fill[white,draw=cyan,thick] circle (3pt);}},postaction={decorate}},
    full/.style={decoration={markings,
    mark=at position #1 with {\fill circle (3pt);}},postaction={decorate}},
}
\numberwithin{equation}{section}
\theoremstyle{plain}
\newtheorem{thm}{Theorem}
\newtheorem{lemma}[thm]{Lemma}
\newtheorem{cor}[thm]{Corollary}
\newtheorem{prop}[thm]{Proposition}
\theoremstyle{definition}
\theoremstyle{remark}
\newtheorem{rem}[thm]{Remark}
\numberwithin{thm}{section}
\definecolor{applegreen}{rgb}{0.55, 0.71, 0}
\definecolor{aqua}{rgb}{0, 1.0, 1.0}
\title{Rationalizability, Iterated Dominance, and
 the Theorems of Radon and Carathéodory}
\author{Roy Long \thanks{University of Chicago Economics \\ Originally written February 2022. This version January 2024}}
\date{February 2022 \\ \bigskip \bigskip \normalsize{\emph{Stanford Economic Review 2024}}  } 
\begin{document}

\maketitle

\begin{abstract}
The game theoretic concepts of rationalizability and iterated dominance are closely related and provide characterizations of each other. Indeed, the equivalence between them implies that in a two player finite game, the remaining set of actions available to players after iterated elimination of strictly dominated strategies coincides with the rationalizable actions. I prove a dimensionality result following from these ideas.
I show that for two player games, the number of actions available to the opposing player provides a (tight) upper bound on how a player's pure strategies may be strictly dominated by mixed strategies. I provide two different frameworks and interpretations of dominance to prove this result, and in doing so relate it to Radon's Theorem and Carathéodory's Theorem from convex geometry. These approaches may be seen as following from point—line duality. A new proof of the classical equivalence between these solution concepts is also given.

\end{abstract}
\newpage

\section{Introduction}

Strategic dominance is a standard concept in game theory. It encapsulates the intuition that in a strategic environment, one should never take some action $a$ if in all possible scenarios (formally, all profiles of the other players' actions), another action $a'$ does strictly better.

In this section, we review the concepts of dominance and rationalizability, and summarize the content of the paper. \\

We consider finite, two player strategic games. For $i=1,2$, player $i$'s (pure) strategies is the set of actions $A_i$.

\textbf{Definition:} A pure strategy for player $i$, $a_i \in A_ii$ is strictly dominated by another pure strategy $a_i'$ if $u_i(a_i, a_{-i}) < u_i(a_i', a_{-i})$ for any $a_{-i}.$

Dominance can be generalized to mixed strategies, where players randomize over actions. In this case, the expected payoff is compared.

\textbf{Definition:} A (pure) strategy $a_i \in A_i$ is strictly dominated by a mixed strategy $s_i$ over the set $\{a_j\}$ if $u_i(a_i, a_{-i}) < U_i(s, a_{-i}) = \sum_{j} p_j u_i(a_j, a_{-i})$ for all $a_{-i}$ where $(p_j)$ is a probability distribution over $\{a_j\}$ specified by $s.$

If the game is finite, the iterated elimination of strictly dominated strategies may be performed. The simple algorithm is to repeatedly remove strategies (for either player) which are strictly dominated by some mixed strategy from the game, until no such strategies are left. It is easy to show that the set of remaining strategies are those that were not strictly dominated in the original game, and any strictly dominated strategy is removed by this process.

More recently, the solution concept of \textit{rationalizability} was developed. For two player games the definition is as follows:

\textbf{Definition:} An action $a_i \in A_i$ is \textit{not rationalizabile} if for any belief $q = (q_1, \ldots, q_m)$ player $i$ has about player $(-i)$'s strategy $s_{-i},$ there exists some other action $a_i' \in A$ (which may depend on $q$) such that $U_i(a_i, s_{-i}) < U_i (a_i', s_{-i})$

One can also give a recursive definition of $k$-level rationalizability ($k \in \mathbb{N}$), and repeatedly remove strategies that are not rationalizable until all remaining actions are rationalizable.

It is easy to show that any strictly dominated strategy is not rationalizable, as one intuitively expects. It turns out that the converse is true as well, which was first shown in 1984 independently by both Bernheim [1] and Pearce [2].

\begin{thm} (Pearce, Bernheim).
    Given a finite, two player strategic game $G,$ the set of (pure) strategies that remain after iterated elimination of strictly dominated strategies are precisely the strategies that are rationalizable in $G.$
\end{thm}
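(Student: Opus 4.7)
The plan is to prove the two containments separately. The direction ``strictly dominated $\Rightarrow$ not rationalizable'' is nearly immediate and extends to iterated elimination by induction. The converse requires a convex separation argument.

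For the easy direction, suppose $a_i$ is strictly dominated by a mixed strategy $s_i$ at some stage, with $s_i$ supported on currently surviving actions and dominance taken against the opponent's currently surviving actions $A_{-i}^{(k)}$. For any belief $q$ on $A_{-i}^{(k)}$, $U_i(s_i, q) > U_i(a_i, q)$, and since $U_i(s_i, q)$ is a convex combination of pure-strategy payoffs from the support of $s_i$, at least one pure $a_i^{(j)}$ in that support must beat $a_i$ against $q$. Inducting over elimination stages, and using the inductive hypothesis that previously eliminated actions are not rationalizable (so any belief concentrated on surviving opponent actions is still admissible), yields that $a_i$ itself is not rationalizable.

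For the converse, I would use separation. Enumerate $A_{-i} = \{b_1, \ldots, b_m\}$ and associate to each $a \in A_i$ its payoff vector $v_a = (u_i(a,b_1), \ldots, u_i(a, b_m)) \in \mathbb{R}^m$. Then $a_i$ is strictly dominated by some mixed strategy iff the convex set $C = \mathrm{conv}\{v_a - v_{a_i} : a \in A_i\}$ intersects the open positive orthant $\mathbb{R}_{>0}^m$. If $C \cap \mathbb{R}_{>0}^m = \emptyset$, a separating-hyperplane argument (applied to the disjoint convex sets $C$ and $\mathbb{R}_{>0}^m$, with the orthant being a cone forcing the functional to lie in $\mathbb{R}_{\geq 0}^m$) yields a nonzero $q \in \mathbb{R}_{\geq 0}^m$ with $\langle q, v_a - v_{a_i}\rangle \leq 0$ for every $a \in A_i$. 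Normalizing $q$ to a probability distribution gives a belief on $A_{-i}$ making $a_i$ a best response in the sense of the paper's definition, i.e., rationalizable. To iterate, I would define the surviving sets $A_i^{(k)}$ and the levels of rationalizability in parallel, then argue inductively that at each stage the separating belief can be chosen supported on the opponent's surviving actions by restricting all payoff vectors to coordinates in $A_{-i}^{(k)}$ before reapplying the separation.

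The main obstacle is the separation step: one must verify carefully that ``not strictly dominated by any mixed strategy'' corresponds precisely to avoidance of the \emph{open} positive orthant (strict componentwise inequality being essential) and that the separating functional necessarily lies in the closed nonnegative orthant so that it can be normalized to a probability distribution. Once this one-shot equivalence is in place, the passage to the iterative statement is routine bookkeeping, verifying that the separating belief constructed at stage $k$ is supported on $A_{-i}^{(k)}$, so that at the fixed point $A_i^{(\infty)} = \bigcap_k A_i^{(k)}$ each surviving action is a best response to a belief supported on surviving opponent actions, which is exactly the self-justifying property characterizing rationalizability.
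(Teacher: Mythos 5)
Your argument is correct, but it is not the paper's proof: it is essentially Bernheim's original separating-hyperplane argument, which the paper explicitly cites as prior work and deliberately sets out to replace with a ``new proof'' in the appendix. You work in the \emph{primal} (payoff-vector) picture: each action becomes a point $v_a \in \mathbb{R}^m$, ``not dominated'' becomes ``$\mathrm{conv}\{v_a - v_{a_i}\}$ misses the open positive orthant,'' and the separating functional, once shown to lie in the closed nonnegative orthant and normalized, is the rationalizing belief. The paper instead works in the \emph{dual} (belief-space) picture: each alternative strategy $a_j$ is encoded as the open half-space $H_j$ of beliefs against which it strictly beats $a_i$; ``never a best response'' becomes ``the $H_j$ cover the simplex $S^n$''; and the key tool is a rotation covering lemma (Lemma 4.1) showing that two half-spaces covering a compact convex set can be replaced by a single convex combination $(\lambda a + (1-\lambda)b)\cdot x < 0$ that still covers it. Iterating this merge collapses the cover to one half-space, whose defining convex combination of payoff vectors is exactly the dominating mixed strategy. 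The trade-off: your route gets the one-shot equivalence in a single application of a standard theorem but produces the belief rather than the dominating mixture directly (the mixture comes out of a second, implicit duality step); the paper's route is more constructive about the dominating mixed strategy and ties into its broader theme of representing strategies as half-spaces and of point--line duality with Section 2. Your two caveats --- that the orthant must be open for strictness, and that the functional must land in $\mathbb{R}^m_{\geq 0}$ --- are exactly the points to check, and both go through (the cone structure of the orthant forces $q \geq 0$, and $0 \in C$ forces the separation level to be $0$), so your proposal is sound; it just travels a different road than the one the paper builds.
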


\begin{proof}
    See Appendix.
\end{proof}

Bernheim's proof represents strategies as payoff vectors in $\mathbb{R}^d$ and uses the separating hyperplane theorem. Pearce's proof uses the minimax theorem for zero sum games. See Fudenberg [3] or Yildiz [4] for a presentation of the former and Obara [5] for the latter.

The central result of this paper is $\textbf{Theorem 2.1}.$ We prove a tight bound on the minimum number of strategies needed to form a mixed strategy that strictly dominates another strategy. There is a rich connection to convex geometry in our analysis, and there are several representations of dominance and rationalizability that lead to classical results in this area of mathematics.

The methods in section 2 also yield a new proof of $\textbf{Theorem 1.1}$, mirroring the separating hyperplane proof of the equivalence. We defer our proof to the appendix.

In section 2, we provide a proof of $\textbf{Theorem 2.1}$ using Radon's Theorem motivated by a natural view of the equivalence between rationalizability and dominance. We also show that the bound still holds when a player can have infinitely many actions.

In section 3, we represent dominance using vectors in $\mathbb{R}^d.$ We give another proof of $\textbf{Theorem 2.1},$ and discuss the mathematical connections between the two proofs.

 We may also the two approaches as resulting from a kind of point—line duality. In section 3, we view strategies as payoff vectors (or points), while in section 2, strategies are represented by hyperplanes. 

\section{Connections to Convex Geometry}

We study the following question: Given a two player finite strategic game $G$ with action sets $A, B,$ suppose some player's (without loss of generality player 1) pure strategy $a_i \in A$ is strictly dominated by a mixed strategy on a set of actions $\{a_j\} = A' \subseteq A.$ Can we bound the size of the support of this mixed strategy, $|A'|$?

An obvious bound is $|A'| \leq |A|,$ and furthermore $|A'| \leq |A| - 1,$ as if $a_i \in A'$ then $a_i$ is still dominated by some mix on $A' \setminus \{a_i\}.$

Interestingly, it turns out that we actually can bound $|A'|$ by the number of actions available to player 2, i.e. $|B|.$
This is not at all obvious.
\\

More precisely, we have the following result:

\begin{thm}
Let $G$ be a finite two player game, where player $1$'s set of actions is $A$ and player $2$'s set of actions is $B.$ If $a_i \in A$ is strictly dominated by a mix of strategies over $\{a_j\} \subseteq A,$ then in fact $a_i$ is strictly dominated by a mixed strategy $\{a_j'\}$ consisting of at most $\min(|A|,|B|)$ actions. An analogous bound holds for player $2.$
\end{thm}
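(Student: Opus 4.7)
The plan is to translate the dominance condition into a geometric statement about vectors in $\mathbb{R}^{|B|}$ and then apply a Carathéodory-type argument. For each $a_j \in A$, let $v_j \in \mathbb{R}^{|B|}$ be the payoff vector whose $b$-th coordinate is $u_1(a_j,b)$. A mixture $\sum_j p_j a_j$ with support $A'$ strictly dominates $a_i$ if and only if $\sum_{j \in A'} p_j v_j - v_i > 0$ componentwise, equivalently, setting $w_j = v_j - v_i$, the open positive orthant $\mathbb{R}^{|B|}_{>0}$ meets the convex hull of $\{w_j : j \in A'\}$.

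The bound $|A'| \leq |A|-1$ is immediate, so the content of the theorem is the $|B|$ bound. Here I would invoke the conic form of Carathéodory's theorem: every element of the conic hull of a set in $\mathbb{R}^{|B|}$ can be written as a nonnegative combination of at most $|B|$ linearly independent generators. Applied to the point $y = \sum_{j \in A'} p_j w_j$, which lies in the conic hull of $\{w_j\}_{j\in A'}$ and has every coordinate strictly positive, this produces some $S \subseteq A'$ with $|S| \leq |B|$ and nonnegative coefficients $(q_j)_{j \in S}$ satisfying $y = \sum_{j \in S} q_j w_j$.

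To finish, I would normalize: since $y$ has strictly positive coordinates, $\sum_{j \in S} q_j$ cannot vanish (otherwise all $q_j$ are $0$ and $y=0$), so $p'_j := q_j / \sum_k q_k$ defines a probability distribution on $S$. The scaled vector $y / \sum_k q_k$ still lies in $\mathbb{R}^{|B|}_{>0}$, so the mixture $\sum_{j \in S} p'_j a_j$ strictly dominates $a_i$ using at most $|B|$ actions. Combined with the trivial bound this gives $\min(|A|,|B|)$, and the symmetric statement for player $2$ follows by swapping the roles of $A$ and $B$.

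The delicate point is the normalization step: applying Carathéodory directly to the convex hull would yield a bound of $|B|+1$, one too many. The way to shave off the extra dimension is to pass to the conic hull, which is legitimate precisely because $y > 0$ is nonzero and allows rescaling. In Radon-flavored terms, the same reduction can be carried out iteratively: whenever $|A'| > |B|$, the $w_j$ are linearly dependent, so a relation $\sum_j \lambda_j w_j = 0$ lets us perturb $(p_j) \mapsto (p_j - t\lambda_j)$ while preserving $\sum_j (p_j - t\lambda_j) w_j = y$; choosing $t$ as the smallest ratio $p_j/\lambda_j$ over indices with $\lambda_j > 0$ zeros out a coefficient while keeping the rest nonnegative, dropping one action from the support without breaking the strict inequality. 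Iterating this until $|A'| \leq |B|$ and then renormalizing to sum to $1$ gives the theorem, and the main obstacle at each stage is simply verifying that strict positivity of $y$ is preserved so that a final renormalization is legitimate.
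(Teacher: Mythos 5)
Your argument is correct and is essentially the paper's second proof of this theorem (the Section 3 ``linear algebra perspective''): represent actions as payoff vectors in $\mathbb{R}^{|B|}$, reduce strict dominance by a mixture to a componentwise inequality, and apply the conical form of Carath\'eodory's theorem to cut the support down to at most $|B|$ vectors before renormalizing. Your choice to work with the difference vectors $w_j = v_j - v_i$ and the strictly positive point $y$ makes the final normalization slightly cleaner than the paper's origin-augmented convex hull, but the underlying mechanism --- linear dependence of more than $|B|$ vectors in $\mathbb{R}^{|B|}$ plus the standard coefficient-zeroing perturbation --- is the same.
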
 

\begin{rem}
    The bound for player 1 can be improved to $\min(|A|-1, |B|)$ based on the simple observation above. In fact, we will show that this bound is tight.
\end{rem}

Now the result for player 1 is  trivial if $|A| \leq |B|,$ but it is not immediately clear why such a result should be true otherwise. Why does the number of actions the opponent has affect (and moreover provides a bound on) how a player's pure strategies are dominated by mixes of their own strategies?
\\
\newline
For example, if player $2$ has $|B| = 4$ available actions and player $1$ has $|A| = 10$ available actions, then this tell us that if some action $a_i \in A$ of player $1$ is strictly dominated, it's in fact dominated by some set of just $4$ of player $1$'s available actions.
\\
\newline
When $|B| = m = 1$, it is trivial, as player $2$ only has one action that she must play. When $m=2,$ this tells us that if a pure strategy is strictly dominated, either it is dominated by another pure strategy or by a mix of two strategies.
\\
\newline
Henceforth, we assume von Neumann Morgenstern preferences, i.e. preferences over strategies are compared based on their expected payoffs.
\\
\newline
To motivate this theorem, we start by first examining the case for $m=2.$ Consider the simple $3 \times 2$ game, with the payoffs as listed, where rows are player $1's$ actions and columns are player $2's$ actions.
\\

\begin{figure}[H]
    \centering
\begin{tabularx}{0.3\textwidth}
{
  | >{\centering\arraybackslash}X 
  | >{\centering\arraybackslash}X
  | >{\centering\arraybackslash}X| }
  \hline
  1 $\backslash$ 2 & L & R \\
 \hline
 U & (6,1) & (0,3) \\
 \hline
 M & (2,1)  & (5,0)  \\
\hline
D & (3,2) & (3,1) \\
\hline 
\end{tabularx} \par 
\bigskip
\caption{payoff matrix}
\end{figure}

Then if player $1$ has belief $(q,1-q)$ over $\{L, R\}$ about player $2$'s strategy, then player $1$'s expected payoffs by playing $U,M,D$ are, respectively: \\
\newline
$E_U = 6q$
\newline
$E_M = 2q + 5(1-q) = 5-3q$
\newline
$E_D = 3$ \\
\\
By plotting these as graphs on the domain $q \in [0,1],$ we can see that for every $q \in [0,1],$ the $E_D(q)$ is less than at least one of $E_U(q)$ and $E_M(q).$ rationalizability then tells us that $D$ is strictly dominated by some mixed strategy of $U$ and $M.$
\\
\newline
Let $E_U = f_1, E_M = f_2, E_D = f_3$ for ease of notation. Since these functions are all linear and therefore convex, their maximum is also convex. So by taking the upper convex hull of the lines, ie. $f_{MAX} = \max_{1\leq i \leq3} \{f_i\},$ the graph of $f_3$ lies entirely below the upper convex hull $f_{MAX}$ and is strictly dominated by it.
\\
Now let's look at a different game, where player $2$ again has two actions $B = \{b_1, b_2\}$, and suppose player $1$ has five actions $\{a_1,\ldots,a_5\}$ corresponding to the following expected payoffs for player 1, if she has belief $(q,1-q)$ over $\{b_1, b_2\}$: \\
\newline
$E_1(q) = 1.2q+0.4$
\newline
$E_2(q) = -1.3q+1.3$
\newline
$E_3(q) = 0.5q+0.8$
\newline
$E_4(q) = -0.8q+1$
\newline
$E_5(q) = 0.8.$
\newline

The graph below shows the expected payoffs of these strategies as a function of the belief $q \in [0,1],$ where the black curve is $E_2,$ the green curve is $E_4,$ the horizontal purple curve is $E_5,$ the positive sloping purple curve is $E_1,$ and the blue curve is $E_3.$ \\

\begin{figure}[H]
    \centering
    \includegraphics[scale=0.7]{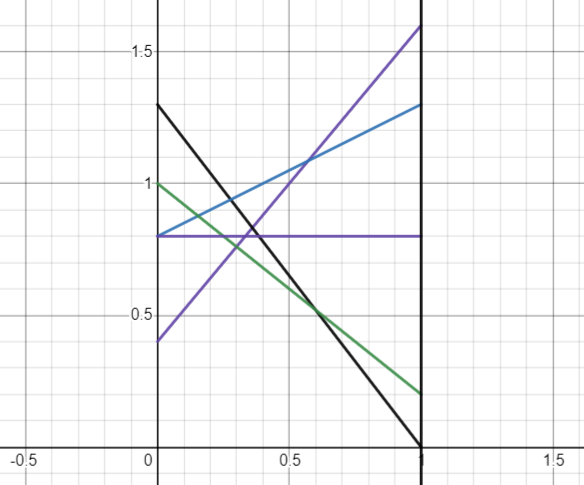}
    \caption{expected payoffs given beliefs}
\end{figure}

We can see that $a_4$ and $a_5$ are strictly dominated by $a_1,a_2,a_3$ as \\ $E_4(q), E_5(q) < \max\{E_1(q), E_2(q), E_3(q)\}$ for all $0 \leq q \leq 1.$
For example, $a_4$ is strictly dominated by a mix over $a_1,a_2,a_3$ with mixed strategy $q = (0.2,0.3,0.5),$ which results in an expected payoff of $0.2(1.2q+0.4)+0.3(-1.3q+1.3)+0.5(0.5q+0.8) = 0.1x+0.87 > 0.8 = E_5(q)$ for all $q \in [0.1]$, and $a_5$ is strictly dominated by a mix over $a_1,a_2,a_3$ with mixed strategy $q = (0.2,0.7,0.1)$ which results in an expected payoff of $0.2(1.2q+0.4)+0.7(-1.3q+1.3)+0.1(0.5q+0.8) = 1.07 - 0.62q > -0.8q + 1$ for all $q \in [0,1].$
\\

Notice that in both cases, we only need to choose two out of the three  to guarantee a mixed strategy that strictly dominates them. For example, if we consider a mix over with $a_1, a_2$ with probability distribution $(0.3,0.7),$ then the expected payoff is $0.3(1.2q+0.4) + 0.7(-1.3q+1.3) = 1.03 - 0.55 q > -0.8q + 1 = E_4(q)$ for all $q \in [0,1].$ Similarly, $a_5$ can be dominated by mixes over just $\{a_1, a_2\},$ as well as a mix of $\{a_2, a_3\}.$ Graphically, we see that for a strictly dominated action, we can choose just two of the graphs of other functions such that the maximum of them lies strictly above the dominated strategies' graph.
\\
\newline
Unfortunately, there are some edge cases where strict dominance never happens and the graph of every $f_i$ is part of the upper convex hull $\max f_i$ at some point.
\\
\newline
Thus, the equivalence of rationalizability and elimination of dominated strategies gives us the following result:

\begin{prop}
\textbf{(Theorem 2.1 for $d=2$)} 
Let $f_1,f_2,\ldots,f_k : [0,1] \to \mathbb{R}$ be linear, i.e. $f_i(x) = a_i x + b_i$ for all $i,$ and $q \in [0,1].$ Suppose $g : [0,1] \to \mathbb{R}$ is also linear and satisfies: for all $x \in [0,1],$ the $\max_{i} (f_i(x)) > g(x).$ Then there exists weights $0 \leq r_1,\ldots,r_k \leq 1$ such that $r_1+\cdots +r_k=1$ and $\sum r_i f_i(x) > g(x)$ for all $x \in [0,1].$ In fact, it is possible to choose the $r_i$ such that all but at most two of them are $0.$
\end{prop}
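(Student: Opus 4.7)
\emph{Plan.} The plan is to translate the proposition into a two-dimensional convex-geometry statement via the map $f_i \mapsto P_i := (f_i(0), f_i(1))$ and $g \mapsto Q := (g(0), g(1))$. Any linear function on $[0,1]$ is determined by its values at the endpoints, so the linear function $\sum r_i f_i$ strictly dominates $g$ on $[0,1]$ if and only if $\sum r_i f_i(0) > g(0)$ and $\sum r_i f_i(1) > g(1)$, i.e.\ if and only if the corresponding point $\sum r_i P_i$ lies in the open quadrant $R := \{(y_0, y_1) : y_0 > g(0),\ y_1 > g(1)\}$. Thus the two assertions become: (a) $C \cap R \neq \emptyset$, where $C := \mathrm{conv}(P_1, \ldots, P_k)$; and (b) this intersection contains a point on some edge or vertex of $C$.

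For (a), the hypothesis that $\max_i f_i(x) > g(x)$ for every $x \in [0,1]$ is precisely the $m=2$ instance of ``$g$ is not rationalizable,'' so (a) follows immediately from Theorem 1.1. A direct separating-hyperplane argument is equally short: if $C \cap R = \emptyset$, separation of these two convex sets gives a normal $(a,b)$, and the fact that $R$ is unbounded in both positive coordinate directions forces $a, b \geq 0$. After normalizing $a + b = 1$ and setting $x = b$, the separating inequality reduces to $f_i(x) \leq g(x)$ for every $i$, contradicting the hypothesis.

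For (b)---the substantive part---I would pick any $p \in C \cap R$ and slide it along the horizontal ray $\{p + t(1,0) : t \geq 0\}$. The ray stays in $R$ throughout, since the first coordinate only increases (staying above $g(0)$) and the second coordinate is fixed (above $g(1)$). Because $C$ is a bounded polygon in $\mathbb{R}^2$, the ray exits $C$ at some $t^* \geq 0$, producing a point $p' \in \partial C \cap R$. The boundary of the planar polygon $C$ is a union of edges among the vertices $\{P_i\}$, so $p' = \lambda P_i + (1-\lambda) P_j$ for some indices $i, j$ and some $\lambda \in [0,1]$ (with the vertex case $i=j$ corresponding to a single pure strategy). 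The mixed strategy $\lambda f_i + (1-\lambda) f_j$ is then the desired dominator with at most two nonzero weights. The only real obstacle is spotting the correct dual picture; once one recognizes that a linear function on $[0,1]$ is a point in $\mathbb{R}^2$ and that ``dominate on $[0,1]$'' becomes ``lie in a fixed open quadrant,'' the at-most-two conclusion is forced by the fact that the boundary of a planar convex polygon is one-dimensional.
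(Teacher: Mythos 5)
Your proof is correct, but it takes a genuinely different route from the one the paper gives for this proposition. The paper argues entirely in the ``primal'' picture: after normalizing $g \equiv 0$ it classifies the surviving lines by the sign of their slope, locates the leftmost zero $L$ of a positive-slope line and the rightmost zero $R$ of a negative-slope line, shows $L < R$, and explicitly mixes those two lines with weights chosen to cancel the slopes, producing a positive constant that dominates. Your argument instead passes to the dual picture $f_i \mapsto P_i = (f_i(0), f_i(1))$, reformulates strict dominance as membership of $\mathrm{conv}(P_1,\ldots,P_k)$ in the open upper-right quadrant based at $(g(0),g(1))$, and then pushes a witness point horizontally to the boundary of the polygon, where the one-dimensionality of that boundary forces an at-most-two-term representation. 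This is essentially the Section 3 / Carath\'eodory approach of the paper (there carried out for general $d$ via the conical Carath\'eodory theorem), specialized to $d=2$; the paper itself flags the point--line duality relating the two viewpoints. What your route buys is immediate generalizability: replacing ``edge'' by ``facet'' and applying Carath\'eodory on the facet gives the bound $d$ for arbitrary $d$ with no new ideas, whereas the slope-sign case analysis is specific to one variable. What the paper's route buys is an explicit closed-form dominating mixture and no appeal to separation or to Theorem 1.1. Two small points to add in a final write-up: (i) the separating normal $(a,b)$ is nonzero, so the normalization $a+b=1$ is legitimate since $a,b \geq 0$; and (ii) when the $P_i$ are collinear the polygon $C$ is degenerate and ``boundary edge'' should be read as the one-dimensional Carath\'eodory statement for the segment $C$ itself. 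Both are routine.
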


In other words, if the graph of an action's expected payoff lies strictly below the upper convex hull (i.e. it's strictly dominated), it is in fact strictly dominated by a mix over a set of at most two actions.

\begin{proof}
The first part of the statement follows from the equivalence of rationalizability and iterated dominance, so we show the second part.
First, note that we may assume that $g(x) = 0$ for $x \in [0,1],$ as we consider $f_i'(x) = f_i(x) - g(x)$ for all $i,$ which are still linear functions and the condition becomes $\max_{i} f_i' > 0$ with $g' = 0.$

First, remove all functions $f_i$ where $f_i(x) \leq g(x)$ for all $x \in [0,1]$, as they will never be part of the upper convex hull.
Now, for the graphs of the remaining functions, either they intersect the graph of $g$ or they don't.
If they don't, then they lie entirely above $g,$ i.e. $f_k(x) > g(x)$ for all $x \in [0,1]$, and we can just let $r_k = 1$ and all other $r_i = 0.$ \\
\newline
Otherwise, they have a unique intersection point.
For simplicity, as we noted above, we may assume $g$ is zero.
Then, we can look at where each $f_k$ intersects $[0,1]$ on the $x$-axis.
Let $L$ be the leftmost $x$-intercept in $[0,1]$ such that the corresponding $f$ has positive slope, and $R$ be the rightmost $x$-intercept in $[0,1]$ such that the corresponding $f$ has negative slope. Then, it's easy to see that $L < R$ because otherwise $\max f$ is not greater than zero at points between $L$ and $R$,
so it reduces to only having two functions $f_1,f_2,$
where $f_1 = a_1(x + b_1)$ and $f_2 = a_2(x+b_2)$ where $a_1 > 0, a_2 < 0$ and $0 \leq b_1 < b_2 \leq 1.$ \\
\newline
Then, we have $r_1 f_1 + r_2 f_2 = r_1 a_1(x+b_1) + r_2 a_2(x+b_2)$, and by setting $r_1 a_1 = -r_2 a_2$, we just get a constant function which is greater than zero ($r_1 f_1 + r_2 f_2$ goes through the intersection point of $f_1, f_2$ which lies above the $x$-axis.)
\end{proof}

Notice that in the last step of the previous proof, we didn't need to explicitly construct the function $r_1 f_1 + r_2 f_2.$ In fact, the equivalence of rationalizability and iterated strict dominance tells us if the graph of $g(x)$ is strictly below the upper convex hull formed by some set of $f_i,$ then $g$ is strictly dominated by those $f_i,$ so in particular, we're guaranteed that such $r_i$ exists for the corresponding $f_i$ that make the weighted sum greater than $g.$ The main result of the previous proof was that we only needed to choose two of the $f_i$ and it would be sufficient that the $\max f_i$ over those two functions would be greater than zero at all points.
\\
\newline
Now, we generalize the preceding intuition to higher dimensions, i.e. when player $2$ has more than two actions.
In essence, what we did in the previous proof was this: after subtracting $g$ from all the $f_i,$ we were left with graphs of linear functions whose overall maximum is strictly greater than $0,$ i.e. their upper convex hull lies strictly the $x$-axis. Then, we looked at all linear functions that intersected $[0,1].$ We can assume this because if the linear function had an $x$-intercept outside of $[0,1],$ then its restriction to $[0,1]$ will clearly be positive. Thus, we only needed to consider the cases of positive and negative slope that intersected $[0,1].$
Then, if a positive sloped line intersected $[0,1]$ at $x_0,$ the $\max_{i} f_i(x) > 0$ for all $x > x_0,$ and if a negative sloped line intersected $[0,1]$ at $x_0,$ we know $\max_{i} f_i(x) > 0$ for all $x_0 < 0.$
\\

Thus, the problem is reduced to: given a set of $1$-dimensional rays or half lines that cover $[0,1],$ we want to show that there exists a subset of at most two of them that also covers $[0,1].$ \\

\begin{figure}[H]
    \centering
    \begin{tikzpicture}[>=stealth, scale=1.5]
    \draw[<->, very thick] (-2,0)--(3,0);
    \draw (-0.3,0.1) -- (-0.3,-0.1);
    \draw (1.3, 0.1) -- (1.3, -0.1);

    \node[below] at (-0.3, -0.2) {$0$};
    \node[below] at (1.3, -0.2) {$1$};


    \draw[empty=0,-stealth, line width=1pt, color=cyan] (0.2,0.25) -- (3,0.25);

    \draw[empty=0,-stealth, line width=1pt, color=cyan]
    (1.0, 0.5) -- (3, 0.5);

    \draw[empty=0,-stealth, line width=1pt, color=cyan] 
    (0.7, 0.75) -- (3, 0.75);

    \draw[empty=0,-stealth, line width=1pt, color=cyan]
    (-0.1, 0.75) -- (-2, 0.75);

    \draw[empty=0,-stealth, line width=1pt, color=cyan]
    (0.4, 0.5) -- (-2, 0.5);

    \draw[empty=0,-stealth, line width=1pt, color=cyan]
    (1.5, 1.0) -- (3, 1.0);

    \end{tikzpicture}
    \caption{Covering of $[0,1]$ by half-lines}
\end{figure}
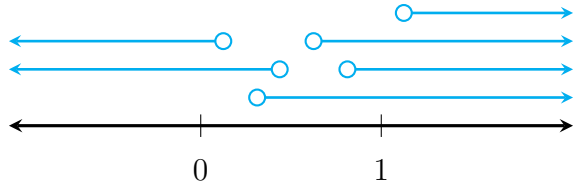

In general, a belief player $1$ can have about player $2$ where player $2$ can choose from $m$ possible actions $b_1,\ldots,b_m$ is an $m$-tuple $(q_1,\ldots,q_m) \in [0,1]^{m}$ where $\sum_{i=1}^{m} q_i = 1.$
\\
Letting $q_m = 1 - q_1 - \cdots - q_{m-1},$ the set of possible beliefs is represented by the $m-1$-dimensional simplex $S^{m-1}$ (including its interior) of points $x = (x_1,\ldots,x_{m-1}) \in [0,1]^{m-1}$ where $\sum_{i=1}^{m-1} x_i \leq 1.$
For example, when $m=2,$ this is just the line segment $[0,1].$ When $m=3,$ this is the closed triangle region bounded by the $x$-axis, the $y$-axis, and the line $x+y=1.$ When $m=4,$ it is a tetrahedron with vertices at $(0,0,0), (1,0,0), (0,1,0), (0,0,1),$ and so forth.
\\
\newline
Let us examine this for the $d=3$ case. We can plot the mixed strategies (beliefs) of player $2$ in $q_1$-$q_2$ space, which is $S^2,$ the right isosceles triangle in $\mathbb{R}^2$ bounded by $q_1, q_2 \geq 0, q_1+q_2 \leq 1.$  Then, the expected payoff for player $1$ by choosing action $a_i,$ with belief $q = (q_1,q_2,q_3=1-q_1-q_2),$ is a function $f_i : S^{1} \subset [0,1]^2 \to \mathbb{R}$ defined by $f(q_1,q_2) = a_{i,1} q_1 + a_{i,2} q_2 + a_{i,3} (1-q_1-q_2).$ Thus, the graph of $f_i$ is the restriction of a plane to $S^1.$
\\
\newline
Finally, we have the graph of $g,$ which is also a plane. Again, we can consider $f' = f-g,$ and the resulting functions are still planes, so we have $\max\{f'(x)\} > 0$ for all $x \in S^1.$
Now, each of the planes $f_i '$ will intersect the plane $\mathbb{R}^2 \times \{0\}$ at some line $l$ in the $q_1$-$q_2$ plane. First, if any $f_i'$ is parallel to $\mathbb{R}^2$, then clearly it is constant and strictly greater than zero, and so that single function $f_i > g$ for all $x \in S^2$.  This means that the pure strategy $a_i$ strictly dominates $g$ so we may ignore this case.
The cross sections will be lines that intersect the closed simplex $S^2$ (i.e. the closed triangle with vertices $(0,0), (0,1), (1,0)$) and we will have open half planes $h_1,h_2,\ldots,h_k$ that cover $S^2.$
\\
\newline
So in the general case, each action $a_i$ has an expected utility $f_i : S^{m-1} \to \mathbb{R}$ whose graph is a $m-1$ dimensional hyperplane in $\mathbb{R}^n,$ restricted to the domain $S^{m-1}.$ After taking a transformation $f_i' = f_i - g,$ the condition turns into $\max \{f_i' \} > 0.$ Then, each hyperplane $f_i'$ intersects $\mathbb{R}^{m-1} \times {0}$, which cuts $\mathbb{R}^{m-1}$ into two half spaces, where one of the (open) half spaces represents the set of points where $f_i' > 0$, and the other (closed) half space represents the set of points where $f_i' \leq 0.$ Of course, we only care about the intersection of the half spaces with the domain $S^{m-1} \subset \mathbb{R}^{m-1}.$
\\

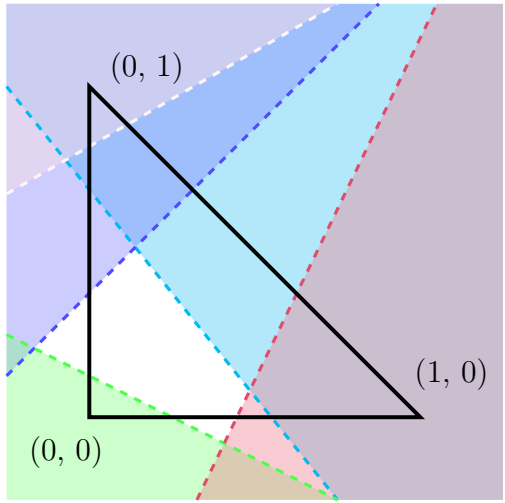
\begin{figure}[H]
    \centering
    \begin{tikzpicture}[scale=1.1]
            
            \fill[cyan!60, opacity=0.4] (-1, 5) -- (-1, 4) -- (3, -1) -- (5,-1) -- (5, 5) -- cycle;
            
            \fill[blue!50, opacity=0.4] (-1,5) -- (-1, 0.5) -- (3.5, 5) -- cycle;

            \fill[green!50, opacity=0.4] (-1, 1) -- (-1, -1) -- (3,-1) -- cycle;


            \fill[purple!50!red!50, opacity=0.4] (1.3, -1) -- (5, -1) -- (5, 5) -- (4.2, 5) -- cycle;


            \fill[pink!50!purple!20, opacity=0.5] (-1, 5) -- (-1, 2.7) -- (3,5) -- cycle;

            \draw[dashed, cyan!80, very thick] (-1, 4) -- (3, -1);

            \draw[dashed, blue!70, very thick] (-1, 0.5) -- (3.5, 5);

            \draw[dashed, green!70, very thick] (-1, 1) -- (3,-1);

            \draw[dashed, purple!70!red!70, very thick] (1.3, -1) -- (4.2, 5);

            \draw[dashed, pink!100!purple!15, very thick] (-1,2.7) -- (3,5);

            \draw[black, line width=1.5pt] (0, 0) -- (4, 0) -- (0, 4) -- cycle;

            \node[above right] at (-0.1, 3.9) { \hspace{0.1 cm} {(0, 1)}};

            \node[below left] at (0.3, -0.1) {{(0, 0)}};

            \node[above right] at (3.8, 0.2) {{(1, 0)}};
    
    \end{tikzpicture}
        
    \caption{A partial covering of $S^2$ with half planes}
\end{figure}

We are thus left with the following problem:

\begin{prop}
\textbf{(Theorem 2.1, geometric form)} \,
Let $Q \subset \mathbb{R}^d$ be a $d$-dimensional convex polytope (including its boundary and interior). Suppose a set of open half spaces $h_1,h_2,\ldots,h_n$ covers $Q.$ Then, there exists a subset of at most $d+1$ of these $h_i$ that also cover $Q.$
\end{prop}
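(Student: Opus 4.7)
The plan is to argue by contradiction, invoking Radon's theorem on a carefully chosen set of $d+2$ witness points inside $Q$. First I would reduce to a \emph{minimal} cover: discard half-spaces one at a time as long as the remaining ones still cover $Q$, ending with a relabeled subcollection $h_1,\dots,h_n$ in which no single $h_i$ can be omitted. If $n \leq d+1$ we are done, so suppose toward contradiction that $n \geq d+2$. By minimality, for each $i$ there is a witness point $p_i \in Q$ that fails to lie in $\bigcup_{j \neq i} h_j$; since the full collection covers $Q$, such a $p_i$ must lie in $h_i$, and only in $h_i$.

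Next I would apply Radon's theorem to the $d+2$ points $p_1,\dots,p_{d+2} \in \mathbb{R}^d$: partition the index set $\{1,\dots,d+2\}$ into disjoint $I,J$ such that $\mathrm{conv}\{p_i : i \in I\}$ and $\mathrm{conv}\{p_j : j \in J\}$ share a common point $q$. Convexity of $Q$ together with $p_1,\dots,p_{d+2} \in Q$ forces $q \in Q$, so $q$ is covered by the collection, i.e., $q \in h_\ell$ for some $\ell \in \{1,\dots,n\}$.

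The contradiction is extracted as follows: the index $\ell$ lies outside at least one of the sets $I$ or $J$ (either because $\ell$ belongs to the other one, or because $\ell > d+2$). Say $\ell \notin J$. Then for every $j \in J$ we have $j \neq \ell$, so by the defining property of witnesses $p_j \notin h_\ell$, i.e., $p_j \in h_\ell^c$. The complement $h_\ell^c$ is a closed half-space, and hence convex, so $q \in \mathrm{conv}\{p_j : j \in J\} \subseteq h_\ell^c$. This contradicts $q \in h_\ell$.

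The genuinely delicate step is the preparation rather than the combinatorics: one must correctly formalize what it means for a half-space to be "essential" in the minimal cover and verify that the open-versus-closed distinction is harmless (the key point is that the complement of an open half-space is a closed half-space, which is still convex, so the Radon hull argument goes through). Once the witnesses are in place, Radon's theorem does all the real work, and one obtains $d+1$ as a tight bound because Radon fails precisely for configurations of $d+1$ or fewer points in general position. This mirrors the classical derivation of Helly's theorem from Radon's theorem, which is the reason the bound $d+1$ appears.
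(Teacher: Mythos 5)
Your proposal is correct and follows essentially the same route as the paper: pass to a minimal cover, extract a witness point uniquely covered by each remaining half-space, apply Radon's theorem to obtain a point in two intersecting convex hulls, and derive a contradiction from the fact that the complement of an open half-space is convex. The only cosmetic difference is that you apply Radon to just $d+2$ of the witnesses and handle the index $\ell$ by cases, whereas the paper partitions all $n$ witnesses at once; the underlying argument is identical.
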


The key observation is that we want to remove redundant half spaces. If a half space covers a region that is also covered by another set of half planes, then we can remove the first half plane and the covered region will be the same. A half space being covered by another half space corresponds to dominance by a pure strategy. Being covered by a set of half spaces corresponds to being dominated by a mixed strategy. We define a \textbf{minimal configuration} to be one where no half plane can be removed and the remaining half planes still cover the polytope, i.e. each half plane contains a point that is uniquely covered by it.

First, we gain some intuition for this result by showing it for $d=3,$ i.e. a covering of $S^2$ with half planes. \\

\begin{prop}
\textbf{(Theorem 2.1 for $d=3$)} \,
Let $Q \subset \mathbb{R}^2$ be a closed convex polygon including its interior. Suppose $n$ open half planes covers $Q.$ Then, there exists a subset of at most $3$ half planes that cover $Q.$
\end{prop}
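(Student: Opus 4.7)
The plan is to pass to a minimal subcover and then invoke Radon's theorem in $\mathbb{R}^2$, which is the standard Radon-based route to Helly's theorem specialized to our setting.

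First I would prune the cover: extract a subcollection $h_1, \ldots, h_k$ of the given half-planes that still covers $Q$ but is \emph{minimal}, in the sense that no single $h_i$ can be removed without losing coverage of $Q$. Minimality forces, for each $i$, the existence of a \emph{witness point} $p_i \in Q$ with $p_i \in h_i$ and $p_i \notin h_j$ for every $j \neq i$; otherwise every point of $Q$ covered by $h_i$ would already be covered by some other $h_j$, and $h_i$ could be dropped. The task then becomes to show $k \leq 3$.

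Suppose for contradiction that $k \geq 4$, and consider the witness points $p_1, p_2, p_3, p_4$. By Radon's theorem in the plane, any four points admit a partition into two disjoint subsets whose convex hulls intersect: write this as $\{1,2,3,4\} = I \sqcup J$ and pick $p \in \operatorname{conv}\{p_i : i \in I\} \cap \operatorname{conv}\{p_j : j \in J\}$. Since $Q$ is convex and contains every $p_i$, we have $p \in Q$, so $p$ lies in some $h_m$ of the minimal cover. I would then case on $m$: if $m \in I$, then $p$ is a convex combination of the points $\{p_j : j \in J\}$, each of which lies in the closed convex complement $\mathbb{R}^2 \setminus h_m$ (since $p_j$ is uniquely covered by $h_j \neq h_m$); convexity of $\mathbb{R}^2 \setminus h_m$ forces $p \in \mathbb{R}^2 \setminus h_m$, contradicting $p \in h_m$. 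The case $m \in J$ is symmetric; if instead $m \in \{5,\ldots,k\}$, then none of $p_1,p_2,p_3,p_4$ lies in $h_m$, so both convex hulls sit in $\mathbb{R}^2 \setminus h_m$ and again $p \notin h_m$, a contradiction.

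The main obstacle, beyond invoking Radon cleanly, is justifying the minimality step and checking that the witness points actually lie in $Q$. This is essentially definitional: each half-plane in a minimal cover must be the sole cover of some point, and that point must come from $Q$ since $Q$ is what we are covering. Once the witnesses are fixed inside $Q$, the convexity of the closed complements $\mathbb{R}^2 \setminus h_m$ carries the rest of the argument and no explicit computation is required.
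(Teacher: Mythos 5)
Your proof is correct, but it takes a different route from the one the paper uses for this particular proposition. The paper proves the $d=3$ case by an elementary, hands-on case analysis on the convex hull of the four witness points: it separately treats the configurations where the hull is a segment, a triangle (with one point inside or on an edge), or a nondegenerate quadrilateral, and in each case exhibits by inspection a point whose coverage forces some half-plane to cover two witnesses. You instead extract a minimal subcover, take witness points, and apply Radon's theorem in $\mathbb{R}^2$ directly: the Radon point lies in $Q$ by convexity, yet every half-plane of the cover misses it because the complement of an open half-plane is closed and convex and contains all the witnesses of the opposite Radon class. This is precisely the argument the paper deploys later for the general $d$-dimensional statement; the paper's treatment of $d=3$ is deliberately kept Radon-free as an intuition-building warm-up. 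What your version buys is uniformity and rigor --- the quadrilateral and boundary subcases of the paper's analysis are stated somewhat informally (``any point that covers $x$ also covers at least two of the vertices''), whereas your convexity argument handles all configurations, including degenerate ones, in one stroke and generalizes verbatim to $\mathbb{R}^d$. What the paper's version buys is a self-contained, visualizable proof that does not presuppose Radon's theorem. One small point worth making explicit in your write-up: the witness points are automatically distinct (since $p_i \in h_i$ but $p_j \notin h_i$ for $j \neq i$), so Radon's theorem, as stated for distinct points, does apply.
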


\begin{proof}
Suppose for contradiction that there exists a covering of the polygon $Q$ with half planes $\{h_i\}$ such that any subset of $\{h_i\}$ that also coves $Q$ contains at least four planes. Consider any minimal configuration $\{h_j'\} \subset \{h_i\}$ that covers $Q.$ By assumption, $| \{h_j \} | \geq 4.$
\\
\newline
Since $\{h_j'\}$ is minimal, we cannot remove any $h \in \{h_j'\}$ to get a smaller subset of half planes that also covers $Q.$ Thus, for every $h_j',$ there exists some point $p_j \in Q$ that is only covered by $h_j'.$ \\
\newline
Thus, we have four points $p_1,p_2,p_3,p_4 \in Q$ such that $p_j$ is uniquely covered by $h_j'$ for all $1 \leq j \leq 4.$ 
\\
\newline
There are a few possible configurations. We check the cases based on the convex hull $H$ of $p_1,p_2,p_3,p_4.$
\\
\newline
Suppose $H$ is a line segment, i.e. $p_1 p_2 p_3 p_4$ are collinear, in that order. Then, we've just reduced it to the $d=2$ one dimensional case, as each plane bisects the line at some intersection point and covers everything to one side. So we see that there must be some plane that covers at least two points as there are more than two planes.
\\
\newline
Now suppose $H$ is a triangle, say $H = \triangle p_2 p_3 p_4.$ Then $p_1$ lies in $\triangle p_2 p_3 p_4.$
If $p_1$ lies strictly in the interior of $\triangle p_2 p_3 p_4,$ then clearly any half planes that cover $p_1$ also intersect $\triangle p_2 p_3 p_4,$ and we see that one of those vertices are necessarily covered as well. (If the plane doesn't cover any of the three vertices, then the entire triangle lies on the other side of the plane and is also not covered so $p_1$ is not covered, which can't happen.)
\\
\newline
If $p_2$ lies on the boundary (one of the sides) of $\triangle p_2 p_3 p_4,$ then we need to be a bit careful but the same result holds as in the previous subcase.
\\
\newline
Finally, suppose $H$ contains all four points $p_1,p_2,p_3,p_4,$ so the points form a non-degenerate convex quadrilateral, without loss of generality $p_1 p_2 p_3 p_4$ in clockwise order. Then, note that $p_1 p_3$ intersects $p_2 p_4$ at some point $x$ inside $H,$ and again we see that any point that covers $x$ also covers at least two of the vertices of $H,$ so this case is covered.
\\
\newline
The result follows, i.e. any covering of $S^2$ with $n$ half planes contains a subcover using at most three half planes.
\end{proof}

For the general case in $\mathbb{R}^d,$ we utilize \textbf{Radon's Theorem}, which states:

\begin{thm}
\textbf{(Radon)} \,
Let $P$ be a set of $n \geq d+2$ distinct points in $\mathbb{R}^d.$ Then, $P$ can be partitioned into two nonempty subsets $V_1$ and $V_2$ such that the convex hulls of $V_1$ and $V_2$ intersect.
\end{thm}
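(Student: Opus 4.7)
The plan is to prove Radon's Theorem via the standard lifting-and-linear-dependence argument. It suffices to handle the case $n = d+2$, since if we produce a partition of any $d+2$ of the points into two sets with intersecting convex hulls, we can simply assign the remaining points arbitrarily to either side and the convex hulls still intersect.

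First I would lift the $n$ points $p_1,\ldots,p_n \in \mathbb{R}^d$ to $\mathbb{R}^{d+1}$ by appending a $1$ to each, obtaining vectors $\tilde p_i = (p_i, 1)$. Since $n \geq d+2$, there are strictly more than $d+1$ of these lifted vectors, so they must be linearly dependent in $\mathbb{R}^{d+1}$. Hence there exist scalars $\lambda_1,\ldots,\lambda_n$, not all zero, with
\begin{equation*}
    \sum_{i=1}^n \lambda_i \tilde p_i = 0,
\end{equation*}
which, reading off the first $d$ coordinates and the last coordinate separately, gives the two conditions $\sum_i \lambda_i p_i = 0$ and $\sum_i \lambda_i = 0$.

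Next I would partition the index set by sign: let $I^+ = \{i : \lambda_i > 0\}$ and $I^- = \{i : \lambda_i < 0\}$. Because $\sum \lambda_i = 0$ and the $\lambda_i$ are not identically zero, both $I^+$ and $I^-$ are nonempty; this is the key place where the fact that the relation is affine (not just linear) is used. Setting $S = \sum_{i \in I^+} \lambda_i = -\sum_{i \in I^-} \lambda_i > 0$, I would rewrite the relation as
\begin{equation*}
    \sum_{i \in I^+} \frac{\lambda_i}{S}\, p_i \;=\; \sum_{i \in I^-} \frac{-\lambda_i}{S}\, p_i,
\end{equation*}
where on each side the coefficients are nonnegative and sum to $1$. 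Thus both sides represent the same point $x \in \mathbb{R}^d$, which lies simultaneously in the convex hull of $V_1 = \{p_i : i \in I^+\}$ and the convex hull of $V_2 = \{p_i : i \in I^-\}$. Taking any partition of $P$ extending $(V_1, V_2)$ (assigning leftover points with $\lambda_i = 0$ arbitrarily) completes the proof.

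There is no genuine obstacle here; the only subtlety worth flagging is the need for the \emph{affine} dependence $\sum \lambda_i = 0$ rather than a mere linear one, which is precisely what forces the lift to $\mathbb{R}^{d+1}$ and what guarantees both $I^+$ and $I^-$ are nonempty. Without that extra coordinate one could have all $\lambda_i$ of the same sign and no partition would be produced.
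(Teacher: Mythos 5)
Your proof is correct and follows essentially the same route as the paper's: obtain an affine dependence $\sum \lambda_i p_i = 0$, $\sum \lambda_i = 0$ among $d+2$ of the points, split the indices by the sign of $\lambda_i$, and exhibit the common point of the two convex hulls. The only difference is cosmetic — you justify the affine dependence explicitly via the lift to $\mathbb{R}^{d+1}$, whereas the paper simply asserts the existence of the multipliers.
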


\begin{proof}
    See appendix.
\end{proof}

For the general case in $\mathbb{R}^d,$ let $Q \subset \mathbb{R}^d$ be a convex polytope (i.e. including the boundary and interior).
Suppose there are $n \geq d+2$ half spaces $h_1 \ldots h_n$ which cover $Q,$ such that each half space contains a point $p_i \in Q$ that is only covered by $h_i.$
\\
\newline
By Radon's Theorem (as $n \geq d+2$), we can partition $P = \{p_1,\ldots,p_n\}$ into two sets $P_1$ and $P_2$ such that their convex hulls $H(P_1), H(P_2)$ intersect. Consider a point $x \subset H_1(P_1) \cap H(P_2)$ in this intersection, which lies in the original polytope $Q$ by convexity: as $P_1,P_2 \subset P \subset Q,$ we have $H(P_1), H(P_2) \subset H(P) \subset H(Q) = Q.$
\\
\newline
However, $x$ cannot be covered by any $h_i$ with corresponding $p_i$ in $P_1,$ as $x$ is in $H_2,$ and so any such $h_i$ would have to intersect $H_2$ and contain some other point in $P_2.$ Analogously $x$ cannot be covered by any $h_i$ with $p_i$ in $P_2.$ So $x \in Q$ is not covered by $\bigcup h_i,$ which is a contradiction.
This uses the fact that for an open half space, if a set of points all lie on the other side, then their convex hull also lies entirely on the other side.  
\\
\newline
So letting $Q = S^{n-1},$ the $n-1$-dimensional simplex, we have proven \textbf{Theorem 2.1}. $\blacksquare$
\\

An immediate result of \textbf{Theorem 2.1} is that it gives us a characterization of which actions can be removed by iterated strict dominance, i.e. which ones are rationalizable. Indeed, as we've shown, if $f_i$ lies under the upper convex hull of $f_1,\ldots,f_m,$ then $f_i$ is strictly dominated. In practice, we should be able to solve this with linear programming. If $f_i$ is part of the upper convex hull at any point $q \in S^{m-1},$ then $f_i(q) \geq f_k(q)$ for all $1 \leq k \leq m,$ so no mixed strategy beats $f_i$ with belief $q.$ Finally, we point out the case of weak dominance. When $m=2,$ it's easy to see that either the graph of the function lies strictly below the upper convex hull, coincides with the upper convex hull on some set that contains an interval, or intersects the upper convex hull exactly at a single point. The last case is when the action is weakly dominated. For $m=3,$ we see that weak dominance can result in a weak set (where the graph of the function coincides with the upper convex hull) of a point or a line, and so forth.
\\

We also note that $\textbf{Theorem 2.1}$ still holds even if one player has infinitely many actions. 
 \begin{cor}\, Let $G$ be a two player game where player $2$ has a finite number of actions $B = \{b_1,\ldots,b_m\}$, and player $1$ has infinitely many actions $A = \{a_i\}_{i \in I}$ for some index set $I.$ Then, if some action $a_i$ is strictly dominated by a mixed strategy over a set of actions $\{ a_j \} = A' \subset A,$ in fact $a_i$ is strictly dominated by a mixed strategy over a finite set of at most $m$ actions. 
\end{cor}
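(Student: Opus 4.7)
My plan is to reduce to the finite-action case of Theorem 2.1, exploiting that the relevant payoff data lives in $\mathbb{R}^m$ where $m = |B|$, regardless of how large player 1's action set is.

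First, I would encode each action by its payoff vector against player 2's pure strategies: for $a \in A$, set $v(a) = (u_1(a,b_1),\ldots,u_1(a,b_m)) \in \mathbb{R}^m$. The hypothesis then reads: there is a probability distribution $(p_j)$ on some $A' = \{a_j\} \subseteq A$ with $w := \sum_j p_j\, v(a_j) > v(a_i)$ componentwise in $\mathbb{R}^m$.

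Second, I would extract a finitely-supported sub-dominator. If $A'$ is already finite there is nothing to do; otherwise I index $A' = \{a_1', a_2', \ldots\}$ and consider the renormalized truncations $s_N$ placing mass $p_j / P_N$ on $a_j'$ for $j \leq N$, where $P_N = \sum_{j \leq N} p_j$. For each of the $m$ pure responses $b_k$, $U_1(s_N, b_k) \to U_1(s, b_k) > u_1(a_i, b_k)$ as $N \to \infty$; since there are only $m$ inequalities to satisfy, some finite $N$ works, yielding a finitely-supported mixed strategy $s_N$ on a finite $A'' \subseteq A'$ that strictly dominates $a_i$.

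Finally, I would apply Theorem 2.1 to the finite subgame in which player 1's action set is $\{a_i\} \cup A''$ and player 2's action set is $B$. The theorem produces a strict dominator of $a_i$ supported on at most $\min(|\{a_i\} \cup A''|,\, |B|) \leq m$ actions of $A''$, which is precisely the conclusion.

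The only delicate step is the truncation, and this works precisely because $|B| = m$ is finite: only $m$ strict inequalities need to survive a continuous perturbation, so finiteness of player 2's action set is the essential hypothesis. An equivalent route avoiding the explicit indexing of $A'$ is to observe that $w$ lies in the closed convex hull of $\{v(a) : a \in A'\}$ and that $\{y \in \mathbb{R}^m : y > v(a_i)\}$ is open; a nearby finite convex combination of the $v(a_j)$ therefore also exceeds $v(a_i)$ coordinatewise, and Theorem 2.1 again finishes the reduction.
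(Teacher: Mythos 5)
Your proof is correct, but it reaches the finite-support reduction by a different mechanism than the paper. The paper stays in the dual (belief-space) picture of Section 2: the hypothesis says the open half-spaces $h_{a_j}$, $a_j \in A'$, cover the compact simplex of beliefs, so Heine--Borel extracts a finite subcover, and the Radon-based covering result then cuts this down to at most $m$ half-spaces; translating back to a dominating mixed strategy at the end leans on the rationalizability--dominance equivalence. You instead work in the primal (payoff-vector) picture of Section 3: you truncate and renormalize the dominating lottery, and observe that only the $m$ coordinatewise strict inequalities $U_1(s_N, b_k) > u_1(a_i, b_k)$ need to survive the limit, so some finite truncation already dominates; Theorem 2.1 applied to the resulting finite subgame finishes. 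Your alternative phrasing --- $w$ lies in the closed convex hull of the payoff vectors while $\{y : y > v(a_i)\}$ is open --- is the cleanest version of this and sidesteps the countable indexing of $A'$ (which is harmless anyway, since a probability distribution on a discrete set has countable support, though you might say so explicitly). What each approach buys: the paper's compactness argument is shorter and makes the role of the simplex's compactness vivid, but it routes through the equivalence theorem twice; your argument is more elementary and self-contained for the infinite-to-finite step, using only continuity of finitely many linear functionals, and it isolates exactly where finiteness of $|B|$ enters. Both ultimately defer to the finite case of Theorem 2.1, so I see no gap.
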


\begin{proof}
We simply note that we are covering $Q = S^{n-1}$ with open half spaces $h_{a_j},$ and since $Q$ is compact (closed and bounded), $\bigcup_{a_j \in A'} h_{a_j}$ forms an open cover of $Q,$ so there exists a finite subcollection $a_1', a_2', \ldots, a_k' \in A'$ such that $\bigcup_{a_j' \in A'} h_{a_j'}$ also covers $Q.$ Then by the previous result, we again know that there exists a subset of at most $m$ of these $h_{a_j'}$ that covers $Q,$ and we are done.
\end{proof}

\bigskip

\section{A Linear Algebra Perspective}

Now, let us turn to a completely different representation of dominance. This approach yields a more natural, geometric, interpretation of \textbf{Theorem 2.1}, which can be seen fundamentally as a dimensionality result.
\\
\newline
Again suppose player $2$ has $m$ actions $b_1,\ldots,b_m.$
We view the payoffs for player $1$ as a matrix $A = (a_{ij} )$  (linear transformation), and each action $a_i$ as a row vector $v_i = (a_{i,1}, \ldots, a_{i,m}).$
\\
\newline
Assuming von-Neumann Morgenstern Preferences, by taking a positive affine transformation, we may first transform the matrix to $A' = A + cJ$, where $J$ is the $n \times m$ matrix with all $1$'s and $c$ is a positive constant, so that all coordinates are strictly positive.
\\
\newline
We view each payoff action as a vector in $\mathbb{R}^m.$ Then, if we play a mixed strategy over some actions $\{a_j\}$ with probabilities $\{p_j\},$ then in vector form this represents $\sum p_j v_j,$ where $\sum p_j = 1.$
\\
\newline
In other words, if we take the convex hull of all the $v_j,$ along with the origin, the possible mixed strategies over these vectors is represented by the outer boundary $B$ of the convex hull $H.$ 
\\
\newline
Similarly, we see that the convex polytope formed by taking the convex hull of the vectors is exactly the region
$$\left\{ \sum_{i=1}^{m} \lambda_i v_i \mid 0 \leq \lambda_1,\ldots,\lambda_m \leq 1 \text{ and } \sum_{i=1}^{m} \lambda_i \leq 1 \right\},$$ i.e. the region contained by the affine hull of the vectors $v_1,\ldots,v_m.$
\\
\newline
The natural connection here is that the characterization of probability distributions over $m$ elements using the simplex and the linear algebra formulation of convex hull of points using linear combinations are very similar, which leads to a re-interpretation of the probabilities/weights as coefficients of vectors.
\\
\newline
With this formulation in mind, we now prove several preliminary results about strict dominance.

\begin{prop}
A pure strategy $a_i$ is strictly dominated by another pure strategy $a_j$ if and only if the $k$th component of $v_i$ is strictly less than the $k$th component of $v_i$ for all $1 \leq k \leq m.$
\end{prop}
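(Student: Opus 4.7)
The proposition is essentially a direct translation between two encodings of the payoff data, so the plan is to unpack definitions carefully and match them up. By construction, the $k$th entry of the row vector $v_i$ is $a_{i,k} = u_1(a_i, b_k)$, the payoff to player $1$ when she plays $a_i$ and player $2$ plays the pure action $b_k$. The stated definition of pure-strategy dominance in the introduction requires the strict inequality $u_1(a_i, a_{-1}) < u_1(a_j, a_{-1})$ to hold for every pure action $a_{-1}$ of player $2$, which in our notation means for every $b_k$ with $1 \leq k \leq m$.

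For the forward direction, I would simply specialize $a_{-1} = b_k$ in the dominance condition; this gives $u_1(a_i, b_k) < u_1(a_j, b_k)$, i.e.\ the $k$th coordinate of $v_i$ is strictly less than the $k$th coordinate of $v_j$. For the reverse direction, I would invoke the componentwise inequalities as $k$ ranges over $\{1,\ldots,m\}$ to recover the strict inequality for each pure $b_k$, which is exactly the definition of $a_j$ strictly dominating $a_i$.

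The main, and only, subtlety worth flagging is that the definition of dominance by a pure strategy in the paper is stated against pure actions of the opponent, not against arbitrary mixed beliefs; this is what makes the equivalence a clean coordinatewise statement with no convex-combination bookkeeping. I would add a brief remark that linearity of expectation immediately upgrades the coordinatewise inequality to $U_1(a_i, s_{-1}) < U_1(a_j, s_{-1})$ for any mixed strategy $s_{-1} = (q_1,\ldots,q_m)$ of player $2$, since $\sum_k q_k a_{i,k} < \sum_k q_k a_{j,k}$ whenever each $a_{i,k} < a_{j,k}$ and the $q_k$ are nonnegative with $\sum_k q_k = 1$. There is no genuine obstacle; the statement is a sanity check that the vector representation faithfully encodes pure-strategy dominance, preparing the ground for the more substantive geometric arguments about mixed-strategy dominance that follow.
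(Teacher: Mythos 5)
Your proposal is correct and matches the paper's argument in substance: the paper proves the reverse direction by specializing the belief $q$ to the coordinate vectors $(1,0,\ldots,0),\ldots,(0,\ldots,0,1)$ (your ``pure action $b_k$'' step), and the forward direction by the same linearity-of-expectation summation you relegate to a remark. The only cosmetic difference is that the paper treats dominance against arbitrary beliefs as the operative notion throughout, whereas you take the pure-action definition as primitive and note the upgrade separately; the mathematics is identical.
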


\begin{proof}
First, suppose that for some strategies $a_i, a_j,$ that $a_{i,k} < a_{j,k}$ for all components $1 \leq k \leq m.$
Clearly for any belief $q = (q_1,\ldots,q_m) \in [0,1]^m, \sum q_k = 1$ we have $\sum_{k=1}^{m} q_k (a_{i,k} - a_{j,k}) < 0$ because each term is less than or equal to zero and not all of them are zero, so $\sum_{k=1}^{m} q_k a_{i,k} < \sum_{k=1}^{m} q_k a_{j,k},$ i.e. $E(a_i) < E(a_j),$ and $a_i$ is strictly dominated by $a_j.$
\\
\newline
Now, suppose $v_i = (a_{i,1}, \ldots,a_{i,m})$ is strictly dominated by $v_j = (a_{j,1}, \ldots, a_{j,m}).$ Then, for all possible beliefs $q = (q_1,\ldots,q_m),$ we must have $\sum_{k=1}^{m} q_k a_{i,k} < \sum_{k=1}^{m} q_k a_{j,k},$ so \\ $\sum_{k=1}^{m} q_k (a_{i,k} - a_{j,k}) < 0.$ By letting $q = (1,0,\ldots,0), (0,1,0,\ldots,0),\ldots,(0,\ldots,0,1),$ we see that $a_{i,k} < a_{j,k}$ for all $1 \leq k \leq m.$ 
\end{proof}

\bigskip
It turns out that more generally, if we consider strict dominance by a mixed strategy, we can treat the mixed strategy as a pure strategy vector and compare them as if we were comparing two pure strategies' vectors. This leads us to the following proposition.

\begin{prop}
Consider a mixed strategy over actions $\{a_j\}$ with probability distribution $\{\lambda_j\},$ which we represent as $u = \sum \lambda_j v_j.$ Then, a pure strategy $a_i$ is strictly dominated by the mixed strategy over $\{a_j\}$ if and only if the $k$th component of $v_i$ is strictly less than the $k$th component of $u,$ i.e. $v_i$ is strictly dominated by the pure strategy vector representation of $\{a_j\}$ with lottery $\{\lambda_j\}.$
\end{prop}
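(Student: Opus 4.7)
The plan is to reduce this statement to Proposition 3.1 by treating the mixed strategy's ``payoff vector'' $u = \sum_j \lambda_j v_j$ as if it were a pure strategy vector, and then using linearity of expectation in both directions. Concretely, the forward and reverse directions proceed in parallel to the previous proof, replacing $v_j$ with $u$ at appropriate moments.

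For the forward direction, I would assume $v_{i,k} < u_k$ for all $1 \le k \le m$. Fix any belief $q = (q_1, \ldots, q_m)$ of player 2 with $q_k \ge 0$ and $\sum_k q_k = 1$. The expected payoff from $a_i$ is $\sum_k q_k v_{i,k}$, while the expected payoff from the mixed strategy $\{\lambda_j\}$ over $\{a_j\}$ is
\[
\sum_j \lambda_j \sum_k q_k v_{j,k} \;=\; \sum_k q_k \sum_j \lambda_j v_{j,k} \;=\; \sum_k q_k u_k,
\]
after switching the order of summation. Since $v_{i,k} < u_k$ coordinatewise and each $q_k \ge 0$ with not all zero, the inequality $\sum_k q_k v_{i,k} < \sum_k q_k u_k$ follows, which is exactly strict dominance of $a_i$ by the mixed strategy.

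For the reverse direction, I would assume $a_i$ is strictly dominated by the mixed strategy $\{\lambda_j\}$. Then for every belief $q$, $\sum_k q_k v_{i,k} < \sum_j \lambda_j \sum_k q_k v_{j,k} = \sum_k q_k u_k$, so $\sum_k q_k (v_{i,k} - u_k) < 0$. Specializing $q$ to each standard basis vector $e_k = (0,\ldots,0,1,0,\ldots,0)$ (player 2's pure strategy $b_k$) immediately yields $v_{i,k} < u_k$ for each $k$.

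There is no real obstacle here; the only subtlety is bookkeeping between the two different ``mixing'' operations in play — the weights $\lambda_j$ over player 1's actions (which define $u$) and the belief weights $q_k$ over player 2's actions (which define the expectation). The key identity is the Fubini-style interchange $\sum_j \lambda_j \sum_k q_k v_{j,k} = \sum_k q_k \sum_j \lambda_j v_{j,k}$, which both reduces the mixed-vs-pure comparison to a coordinatewise comparison and shows why Proposition 3.1 generalizes verbatim once $u$ is recognized as the legitimate ``surrogate payoff vector'' of the mixed strategy.
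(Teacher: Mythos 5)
Your proposal is correct and follows essentially the same route as the paper: both hinge on the linearity identity $\sum_j \lambda_j \langle v_j, q\rangle = \langle u, q\rangle$, which lets the mixed strategy be treated as the surrogate payoff vector $u$ so that the coordinatewise characterization of Proposition 3.1 applies verbatim. The paper simply cites the previous claim at that point rather than re-deriving the two directions (coordinatewise inequality summed against nonnegative beliefs, and specialization to the standard basis vectors), but the content is identical.
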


\begin{proof}
Let $q \in [0,1]^m$ be any belief player $1$ can have about player $2$'s possible strategy. Then, the expected payoff with the pure strategy $a_i$ is simply the dot product $E(a_i) = \langle v_i, q \rangle,$ and the expected payoff by playing the mixed strategy is
$$E(\{a_j\}) = \sum \lambda_j \langle v_j, q \rangle = \sum \langle \lambda_j v_j, q \rangle = \left\langle \sum \lambda_j v_j, q \right\rangle = \langle u, q \rangle,$$
as with probability $\lambda_j,$ player $1$ plays action $a_j,$ in which case the expected payoff is $\langle v_j, q\rangle.$
Thus, we see that $E(a_i) < E(\{a_j\})$ if and only if $\langle v_i, q \rangle < \langle u, q \rangle$ for all possible beliefs $q,$ which from the previous claim is equivalent to the components of $v_i$ being strictly less than the components of $u.$
\end{proof}

\bigskip
Now, we can start classifying which types of pure strategies are dominated by a mix over other strategies. Let's look at a set of pure strategies $\{a_j\}$ with corresponding vectors $\{v_j\}.$ For example, consider the four vectors in $\mathbb{R}^2$ as below. Then, a good geometric intuition for whether a vector $v$ is strictly dominated by some mix over $\{v_j\}$  is if the vector is contained in the convex hull formed.
\\

For example, consider the following payoffs for player $1$ in a game where player $2$ has two actions and player $1$ has three actions.
A = $ \begin{pmatrix}
1 & 5 \\
5 & 1 \\
2 & 2 \\
\end{pmatrix}$
\\
\newline
We see that $a_3 = (2,2)$ is strictly dominated by a mix of over $a_1, a_2$ with equal probability for both actions, i.e. the mixed vector is $v' = \frac{1}{2} (1,5) + \frac{1}{2} (5,1) = (3,3)$ which strictly dominates $(2,2).$
\\
\newline
If we graph them, we see that $(2,2)$ lies between (inside) the vectors $(1,5)$ and $(5,1),$ i.e. the point $(2,2)$ lies inside the convex hull or triangle with vertices $(0,0), (1,5), (5,1).$ So a good intuitive guess is that points that lie within the convex hull formed by the vectors $(1,5), (5,1)$ will be dominated by some mix of the two.

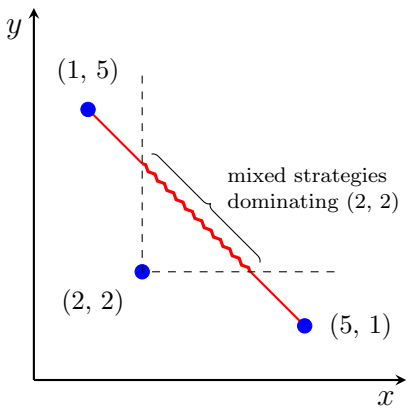
\begin{figure}[H]
\centering

\begin{tikzpicture}[>=stealth, every text node part/.style={align=center}, scale=0.9]
\draw[->, thick] (0,0) -- (5.5,0) node[below left] {$x$};
\draw[->, thick] (0,0) -- (0,5.5) node[below left] {$y$};

\draw[color=black, snake=brace, raise snake=5pt] (1.6,3.2) -- (3.2,1.6);

\node[above right] at (2.7,2.7) {\scriptsize{mixed strategies}};
\node[above right] at (2.7, 2.3) {\scriptsize{dominating (2, 2)}};


\draw[color=red, thick] (0.8, 4) -- (1.6, 3.2);
\draw[color=red, thick] (3.2, 1.6) -- (4, 0.8);

\draw[color=blue] (0.8,4) circle[radius=3pt];
\fill[color=blue] (0.8,4) circle[radius=3pt];

\draw[color=blue] (4, 0.8) circle[radius=3pt];
\fill[color=blue] (4, 0.8) circle[radius=3pt];

\draw[color=blue] (1.6, 1.6) circle[radius=3pt];
\fill[color=blue] (1.6, 1.6) circle[radius=3pt];

\draw[snake=zigzag, segment amplitude=0.6pt, segment length=5pt, color=red, very thick] (1.6, 3.2) -- (3.2, 1.6);

\draw[dashed, color=black] (1.6, 4.5) -- (1.6, 1.6 ) -- (4.5, 1.6);

\node[above] at (0.8, 4.2) {\footnotesize{(1, 5)}};
\node[right] at (4.2, 0.8) {\footnotesize{(5, 1)}};
\node[below left] at (1.5, 1.5) {\footnotesize{(2, 2)}};

\end{tikzpicture}
\caption{vector representation of dominance}
\end{figure}

And indeed, this easily follows from the previous claims.

\begin{prop}
If a strategy vector $v_i$ lies inside the convex hull (affine hull) of the vectors $H \left( \bigcup v_j \right)$ (but not on the outer faces/boundary), then the corresponding strategy $a_i$ is strictly dominated by a mix over $\{a_j\}.$
\end{prop}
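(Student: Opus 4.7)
The plan is to combine the componentwise characterization of strict dominance from Proposition 3.4 with the elementary observation that ``interior to the polytope'' means the convex coefficients recovering $v_i$ from $\{v_j\} \cup \{0\}$ leave positive slack on the origin.

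First, I would recall the standing reduction already made in this section: by adding a large positive constant $c$ to every entry of the payoff matrix (a positive affine transformation, which preserves the dominance ordering under vNM preferences), we may assume that every coordinate of every $v_j$ is strictly positive. This is the key feature that makes scaling up a vector genuinely improve it in every coordinate.

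Next, I would translate the geometric hypothesis into algebra. The region described earlier, $\{\sum \lambda_j v_j : \lambda_j \geq 0,\ \sum \lambda_j \leq 1\}$, is the convex hull of $\{0, v_1, \ldots, v_k\}$. Its ``outer'' boundary (the facets not containing $0$) is exactly the set of points of the form $\sum \lambda_j v_j$ with $\lambda_j \geq 0$ and $\sum \lambda_j = 1$. Saying that $v_i$ lies inside the hull but not on the outer boundary therefore means we can write $v_i = \sum_j \lambda_j v_j$ with $\lambda_j \geq 0$ and $s := \sum_j \lambda_j < 1$ strictly.

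Now set $\mu_j := \lambda_j / s$, so that $(\mu_j)$ is a valid probability distribution over the actions $\{a_j\}$. The corresponding mixed-strategy vector is
\[
u \;=\; \sum_j \mu_j v_j \;=\; \frac{1}{s}\sum_j \lambda_j v_j \;=\; \frac{1}{s}\, v_i.
\]
Since $s < 1$ and every coordinate of $v_i$ is strictly positive, the vector $u = v_i/s$ is strictly larger than $v_i$ in every component. Invoking Proposition 3.4 (strict dominance by a mixed strategy is equivalent to componentwise strict inequality between $v_i$ and the pure-strategy representation $u$ of the mix), we conclude that $a_i$ is strictly dominated by the mix over $\{a_j\}$ with weights $\{\mu_j\}$.

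The only genuinely delicate point is the interpretation of ``outer faces/boundary,'' which I would spell out explicitly so the reader sees why the slack condition $s < 1$ follows. The rest is essentially a one-line scaling argument that relies crucially on the earlier reduction to strictly positive payoffs — without that reduction, multiplying by $1/s > 1$ would not yield a componentwise increase, and the argument would break down.
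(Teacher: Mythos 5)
Your proposal is correct and follows essentially the same route as the paper's own proof: both write $v_i = \sum_j \lambda_j v_j$ with $s = \sum_j \lambda_j < 1$, rescale to the boundary point $u = v_i/s = \sum_j (\lambda_j/s)\, v_j$, and conclude dominance from the strictly positive payoffs. The only cosmetic difference is that you invoke the componentwise characterization of mixed-strategy dominance at the last step, whereas the paper compares the expected payoffs $\langle u, q\rangle$ and $\langle v_i, q\rangle$ directly; your explicit handling of the positivity reduction and of why $s<1$ is in fact a bit more careful than the original.
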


\begin{proof}
By the linear combination definition of convex hull, this means that we can write $v_i = \sum \lambda_j v_j,$ where $0 \leq \lambda_j \leq 1$. Remove all vectors that have trivial contribution so we may assume that all $\lambda_j > 0.$ Note that $\sum \lambda_j = s < 1$ as the the vector $v_i$ does not lie on the outer boundary of the convex hull, so we may extend $v_i$ to a vector $u = \lambda v_i$ with $\lambda > 1$ that intersects the outer boundary of the convex hull.
\\
\newline
In particular, let $u = \sum \lambda_j' v_j = \sum \frac{\lambda_j}{s} v_j$ which represents a mixed strategy of $\{a_j\}$ with probability distribution $\{ \lambda_j / s \}$ as $\sum \frac{\lambda_j}{s} = 1.$ Note that  $\lambda_j< \frac{\lambda_j}{s} $ for all terms in the sum. Then have $$E(\{a_j\}) = \langle u, q \rangle = \left\langle \sum \frac{\lambda_j}{s} v_j, q \right\rangle > \langle \sum \lambda_j v_j, q \rangle = \langle v, q \rangle = E(a_i),$$
and $a_i$ is dominated by this mix over $\{a_j\}$ as claimed.
\end{proof}

\begin{prop} (Characterization of strategies that are strictly dominated by a mix over a given set) \\
Let $\{v_j\}$ be a set of pure strategy vectors corresponding to the actions $\{a_j\}.$ 
Consider the boundary $B,$ or the affine hull of the $\{v_j\}$ defined by 
$$B = \left\{ \sum \lambda_j v_j : 0 \leq \lambda_j \leq 1, \sum \lambda_j = 1 \right\},$$ which represents the set of all possible mixed strategies over $\{a_j\}.$
Then the set
$$X_B = \bigcup_{b \in B} \{ v : v_{(k)} < b_{(k)} \, \forall \, 1 \leq k \leq m \}$$ defines the set of pure strategies that are strictly dominated by some mix of $\{a_j\}.$ i.e. $v$ is strictly dominated by a mix over $\{v_j\}$ if and only if $v \in X_B.$
\end{prop}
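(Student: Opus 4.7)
The plan is to derive both directions directly from the preceding propositions, which have already reduced strict dominance by a mix to a componentwise comparison between $v_i$ and the vector $u = \sum \lambda_j v_j$ representing that mix. Since $B$ is precisely the set of such vectors $u$ as $\{\lambda_j\}$ ranges over all probability distributions on $\{a_j\}$, both directions of the biconditional should fall out of unpacking definitions, so I do not anticipate a substantive obstacle beyond bookkeeping on the $\lambda_j$.

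First I would handle the ``if'' direction. Suppose $v \in X_B$. By definition of $X_B$, there exists some $b \in B$ with $v_{(k)} < b_{(k)}$ for all $1 \le k \le m$. By definition of $B$, we can write $b = \sum \lambda_j v_j$ with $\lambda_j \ge 0$ and $\sum \lambda_j = 1$, i.e.\ $b$ is the vector representation of the mixed strategy over $\{a_j\}$ with weights $\{\lambda_j\}$. The previous proposition then immediately yields that the pure strategy corresponding to $v$ is strictly dominated by this mix.

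For the ``only if'' direction, suppose the pure strategy corresponding to $v$ is strictly dominated by some mix over $\{a_j\}$ with probability distribution $\{\lambda_j\}$. Form $u = \sum \lambda_j v_j$; since $\lambda_j \ge 0$ and $\sum \lambda_j = 1$, we have $u \in B$. Applying the same previous proposition in reverse, the hypothesis of strict dominance is equivalent to $v_{(k)} < u_{(k)}$ for every $1 \le k \le m$. Taking $b = u$ in the definition of $X_B$ shows $v \in X_B$.

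The only subtle point is ensuring that the set $B$ as defined (with $\sum \lambda_j = 1$ and $\lambda_j \ge 0$) exactly matches the set of vector representations of mixed strategies over $\{a_j\}$; this is immediate from the definition of a probability distribution on the finite set $\{a_j\}$. Once that identification is made, the characterization is really just a repackaging of the componentwise dominance criterion already established, so the proof will be short.
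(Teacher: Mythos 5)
Your proof is correct and is exactly the argument the paper intends: the paper states this proposition without an explicit proof, treating it as an immediate consequence of the preceding proposition (strict dominance by a mix $\Leftrightarrow$ componentwise strict inequality against $u = \sum \lambda_j v_j$), which is precisely the unpacking you carry out in both directions. No gaps.
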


\begin{figure}[H]
    \centering
    \includegraphics[scale=0.6]{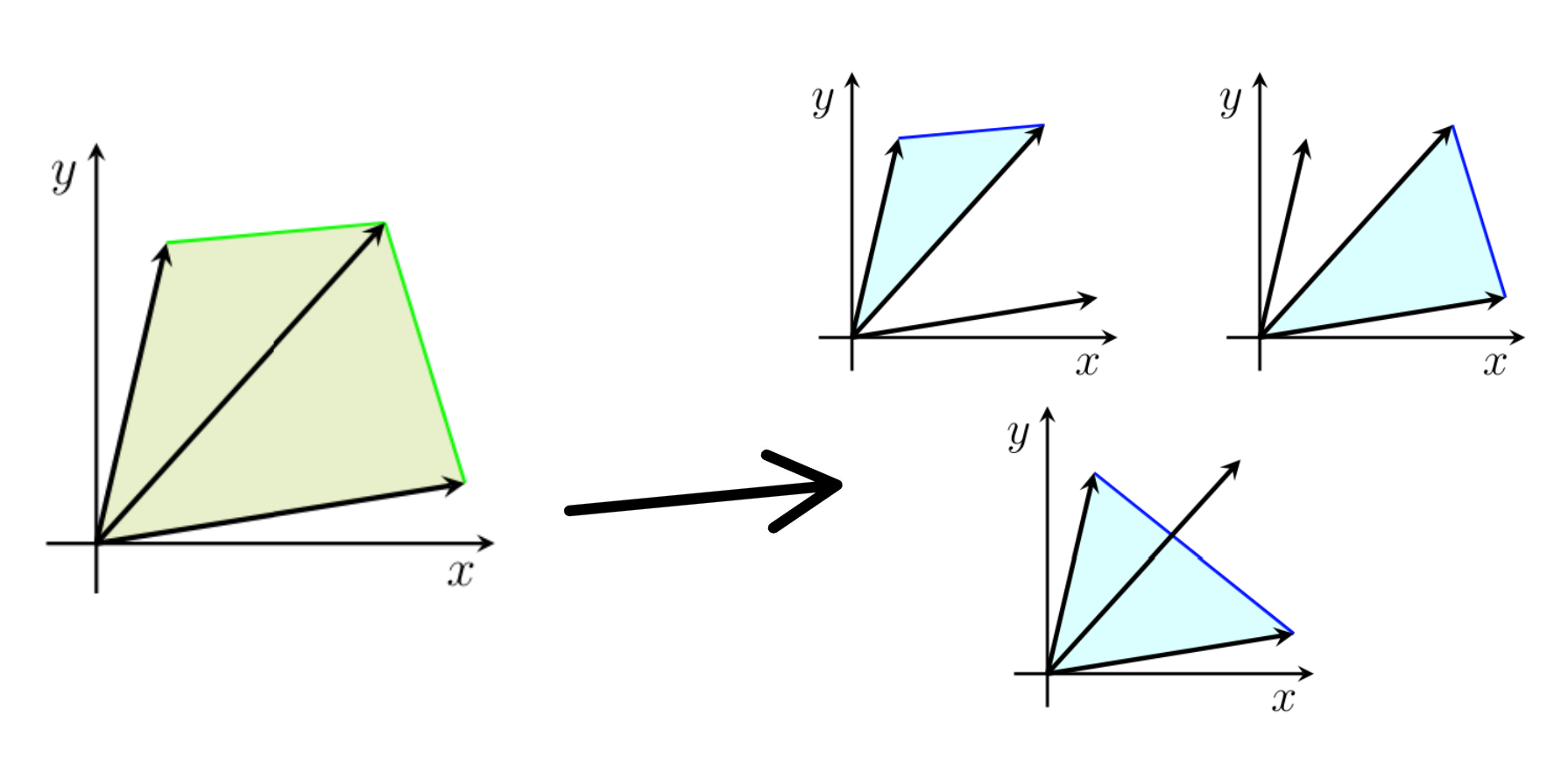}
    \caption{covering of green convex hull by blue triangles}
\end{figure}

Before using this characterization to provide another proof of \textbf{Theorem 2.1}, let us return to the case of points that lie strictly within the convex hull of some set of actions' vectors. We will use this to find an elegant and natural interpretation of \textbf{Theorem 2.1}. For now, we work with the closed convex hull (i.e. including the boundary $B$). The result is the same for open hull in the case of points that lie strictly within $H.$
\\

Let's look at four points $v_1,v_2,v_3,v_4.$ as shown below.
\\
\newline
Then, the region of vectors that are strictly dominated by a mix of $\{v_1,v_2,v_3,v_4\}$ is the region on the following page. \\

We can also look at the pairs of vectors, which bound triangular regions (above right). \\

These triangular regions represent the regions that are strictly dominated by two vectors. More specifically, the convex hulls are a subset of what points are strictly dominated by that set of vectors.
\\
\newline
Now, the motivation for \textbf{Theorem 2.1} becomes clear: the polygon region formed by the convex hull of the vectors $\{v_1,v_2,v_3,v_4\}$ is equal to the union of the convex hulls of pairs of the vectors.

\begin{figure}[H]
    \centering
    \includegraphics[scale=0.45]{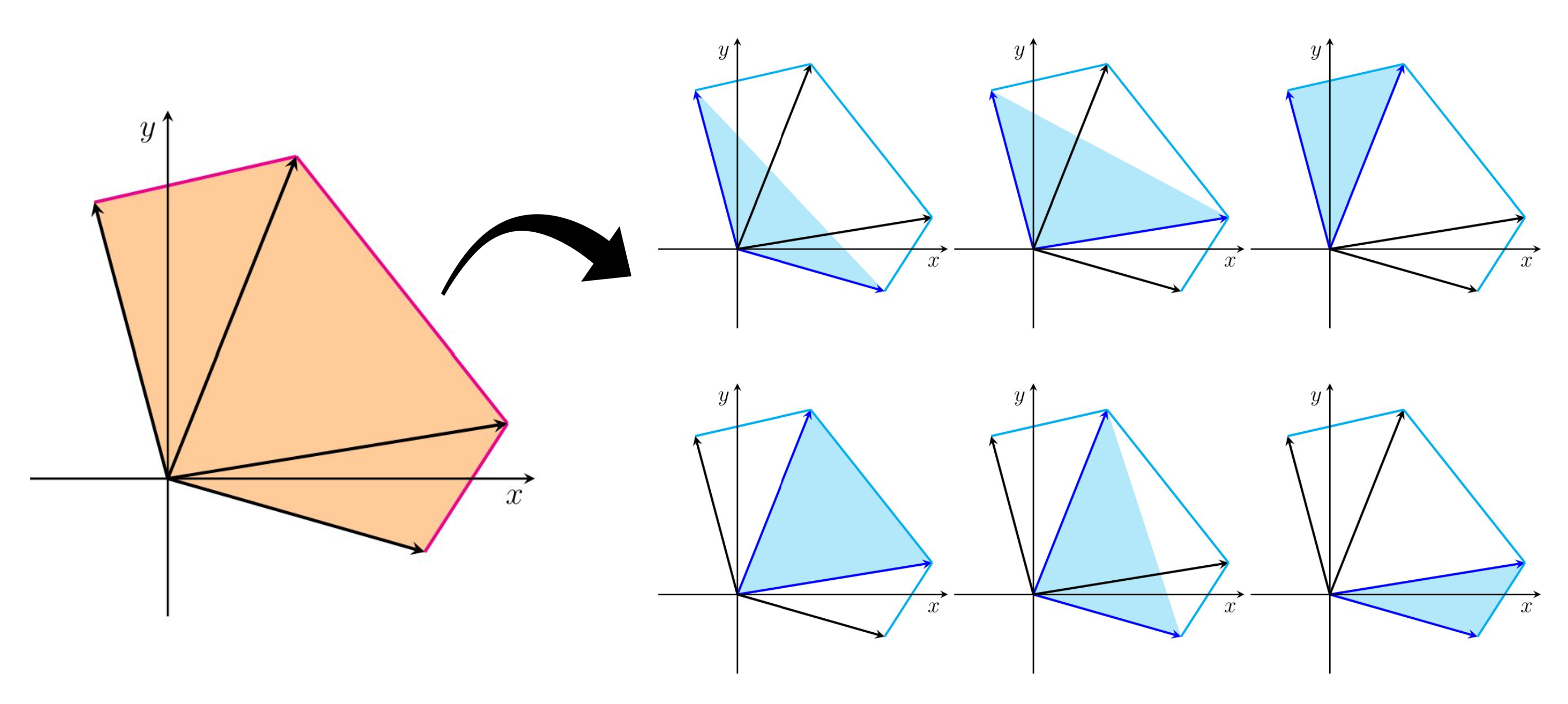}
    \caption{convex hull is covered by the union of triangles over all pairs of vectors}
\end{figure}
\bigskip

Formally, if $H(S)$ denotes the convex hull of a set of points $S,$ then our preceding observation is:
\begin{align*}
    H(O, v_1, v_2, v_3, v_4) &= H(O, v_1, v_2) \cup H(O, v_1, v_3) \cup H(O, v_1, v_4) \cup \ldots \cup H(O,v_3,v_4) \\
    &= \bigcup_{1 \leq i, j \leq 4} H(O,v_i, v_j)
\end{align*}

Similarly, if we have have $n$ points in $\mathbb{R}^3,$ i.e. $v_1,\ldots,v_n$ in $\mathbb{R}^3,$ it's easy to see that taking the union over all triples of points, i.e. $H(O,v_1,\ldots,v_n) = \bigcup_{i,j,k} H(O,v_i, v_j, v_k)$ yields the entire convex hull.
This is very intuitively clear. We are essentially considering all tetrahedrons with a fixed vertex $O$, and by shifting the vertices around, we use these tetrahedra to cover the entire interior of the polyhedra.
\\
\newline
More generally, a non-degenerate set of $d+1$ points in $\mathbb{R}^d$ should determine a $d$-dimensional figure, and by varying across different points in the set, we should be able to cover all $d$-dimensional points in the convex hull, which is the following statement:
Given $n$ nonzero points $v_1,\ldots,v_n$ in $\mathbb{R}^d,$ representing vectors extending from $O,$ then we have: \\ $H(O,v_1,\ldots,v_n) = \bigcup_{S_d \in \binom{[n]}{d}} H(O, v_{k_1}, \ldots, v_{k_d}),$ where $S_d$ varies over all subsets of \\ $[n] = \{1,\ldots,n\}$ with $d$ elements. \\

It is now easy to show that the generalization of our intuitive observation follows from $\textbf{Carathéodory's Theorem}.$

\begin{thm}
\textbf{(Carathéodory)}
If a point $x \in \mathbb{R}^d$ lies in the convex hull of a set $P,$ then $x$ can be written as the convex combination of at most d + 1 points in $P.$ i.e. there is a subset $P' \subset P$ consisting of $d + 1$ or fewer points such that $x \in H(P').$
\end{thm}

\begin{proof}
    See appendix.
\end{proof}
\bigskip

For our proof, we need a variant known as \textbf{Conical Carathéodory's Theorem} 
\begin{thm} \textbf{(Carathéodory's Theorem for Conical Hull)} \,
Let $n \geq d+1$ and $v_1,\ldots,v_n$ be $n$ distinct nonzero vectors in $\mathbb{R}^d.$ Consider the conical hull formed by this set.
Then, if a nonzero point $x$ is in the conical hull, $x$ can be written as the conical combination of at most $d$ of the $v_i.$

\end{thm}
\begin{proof}
    See appendix.
\end{proof} 
\bigskip

\begin{cor}
Furthermore, if $x$ also lies in the convex hull $H(O,v_1,\ldots,v_n),$ i.e. $x$ is a convex combination of the $v_i$ (and $O = 0,$ so sum of coefficients of the $v_i$ is less than or equal to one), then $x$ can be written as the convex combination of at most $d$ of the $v_i,$ and $O.$
\end{cor}
\begin{proof}
    See appendix.
\end{proof}


\bigskip

Finally, we can finish the second proof of $\textbf{Theorem 2.1}.$
Let $v_1,\ldots,v_n$ be the vector payoffs for player $1.$ Let $H = H(O,v_1,\ldots,v_n)$ be the convex hull of the set of vectors, and let $B$ be the outer boundary of $H.$
\\
\newline
From the characterization of these points, we know that a vector $v \in \mathbb{R}^n$ is strictly dominated by a mix over a subset of $\{v_j\} = v_1,\ldots,v_n$ if and only if there exists some $b \in B$ such that $v_{(k)} < b_{(k)}$ for all components $1 \leq k \leq m.$
But $B \subset H,$ and we know from the corollary following Carathéodory's Theorem that all points in $H$ can be represented as the affine sum of at most $d$ distinct vectors. Thus, if $v_i$ is strictly dominated by some such $b,$ we can write $b$ as the affine linear combination of at most $d$ of the $\{v_j\},$ which means that $v_i$ is strictly dominated by a mix over those $v_j,$ and in particular, is dominated by at most $d$ of the other vectors as claimed. $\blacksquare$ \\

We conclude by showing that the bound in \textbf{Theorem 2.1} is tight in the following sense:
\begin{prop} \,
    For every pair $(n,m) \in \mathbb{Z}^2_{\geq 2}$ there exists a game $G_{n,m}$ with $|A| = n$ and $|B| = m,$ and a strategy $a_i \in A$ that is strictly dominated by a mixed strategy over a set $\{a_j\} = A'$ of size $|A'| = \min( |A| - 1, |B| ),$ and $a_i$ is not strictly dominated by a mixed strategy over any smaller set of actions. An analogous statement holds for player 2.
\end{prop}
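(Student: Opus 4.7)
The plan is to exhibit, for each pair $(n,m)$ with $n,m \geq 2$, an explicit ``specialist'' game $G_{n,m}$ that saturates the bound. Setting $k = \min(n-1, m)$, I label player 1's actions $a_0, a_1, \ldots, a_{n-1}$ and player 2's actions $b_1, \ldots, b_m$, and choose payoffs for player 1 as follows. For $1 \leq j \leq k$, the specialist $a_j$ pays $k$ against $b_j$ and $0$ against every other $b_\ell$. The candidate dominated action $a_0$ pays $1 - \epsilon$ against each of $b_1, \ldots, b_k$ and, when $m > k$, pays $-\delta$ against each of $b_{k+1}, \ldots, b_m$, for fixed small $\epsilon, \delta > 0$. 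If $n > k+1$ there are still filler actions $a_{k+1}, \ldots, a_{n-1}$ to define; each is given payoff $-N$ against every $b_\ell$ for some large $N$. Player 2's payoffs are irrelevant to the statement, so any completion will do.

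Next I verify the upper bound by checking that the uniform mix $\tfrac{1}{k}(a_1 + \cdots + a_k)$ strictly dominates $a_0$: against $b_\ell$ with $\ell \leq k$ it pays $\tfrac{1}{k}\cdot k = 1 > 1 - \epsilon$, and against any remaining $b_\ell$ it pays $0 > -\delta$. Thus $a_0$ is strictly dominated by a mix whose support has size exactly $k = \min(|A|-1, |B|)$.

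The heart of the argument, and the only step requiring any care, is showing that no mix supported on fewer than $k$ of player 1's other actions dominates $a_0$. Let $s$ be such a mix, supported on $S \subseteq \{a_1, \ldots, a_{n-1}\}$ with $|S| < k$. If $s$ places positive weight on any filler $a_j$ with $j > k$, then taking $N$ sufficiently large (say $N > k$) forces $s$'s payoff against $b_1$ below $1 - \epsilon$, so $s$ fails to dominate $a_0$. Hence $S \subseteq \{a_1, \ldots, a_k\}$; but $|S| < k$ forces some index $j \leq k$ with $a_j \notin S$, and then the payoff of $s$ against $b_j$ is a convex combination of the other specialists' $b_j$-payoffs, each equal to $0$, so it is $0$. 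This is strictly less than $a_0$'s payoff $1 - \epsilon$ against $b_j$, contradicting dominance. The statement for player 2 follows immediately by transposing the roles of the two players.

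The main obstacle is really just bookkeeping: choosing the three parameters $\epsilon, \delta, N$ so the ``only the specialists can contribute'' reasoning closes cleanly, and packaging both regimes $m \leq n-1$ and $m \geq n-1$ into one construction via $k = \min(n-1, m)$. Once this is set up, the verification of both the dominance claim and the impossibility of smaller support reduces to inspecting a single column $b_j$ on which any candidate small-support mix is forced to pay $0$, while $a_0$ pays the positive value $1 - \epsilon$.
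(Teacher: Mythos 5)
Your construction is essentially the paper's: specialist actions that are each the unique source of payoff in one column, a uniform mix over them that beats the target, and a column argument showing every specialist is needed. (The paper splits into the two regimes $n-1<m$ and $m\le n-1$ with slightly different vectors, and uses all-zero filler rows; your single construction with $k=\min(n-1,m)$ is a clean repackaging of the same idea.)

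There is, however, one step that is false as stated: the claim that if $s$ puts positive weight on a filler action then ``taking $N$ sufficiently large (say $N>k$) forces $s$'s payoff against $b_1$ below $1-\epsilon$.'' The parameter $N$ is fixed when the game $G_{n,m}$ is defined, before the adversary chooses $s$; and for any fixed $N$, a mix with weight $1-\eta$ on $a_1$ and weight $\eta$ on a filler pays $(1-\eta)k-\eta N$ against $b_1$, which exceeds $1-\epsilon$ once $\eta$ is small. So positive filler weight does not by itself break dominance, and you cannot conclude $S\subseteq\{a_1,\ldots,a_k\}$. Fortunately the step is unnecessary: if $|S|<k$ then some specialist $a_j$ with $j\le k$ is missing from $S$, and against $b_j$ every action in $S$ (specialist or filler) pays $0$ or $-N$, so the mix pays at most $0<1-\epsilon$ there and dominance fails regardless of whether fillers appear in the support. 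Deleting the ``rule out fillers'' paragraph and observing that the convex combination in column $b_j$ is $\le 0$ (rather than exactly $0$) closes the argument; at that point you could just as well set the filler payoffs to $0$, as the paper does.
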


\begin{proof}
    We again represent pure strategies in their vector forms, with a pure strategy $a_i$ is associated to a vector $v_i \in \mathbb{R}^{m},$ and a mixed strategy over $\{a_j\}$ associated to a convex combination $\sum_{j} \lambda_j v_j.$  
    
    First suppose $|A|-1 < |B|.$ Then, consider
    \begin{align*}
        v_1 &= (n,0,0,\ldots,0,1,\ldots,1) \\
        v_2 &= (0,n,0,\ldots,0,1,\ldots,1) \\
        v_3 &= (0,0,n,\ldots,0,1,\ldots,1) \\
        & \ldots \\
        v_{n-1} &= (0,0,0,\ldots,n,1,\ldots,1) \\
        v_n &= (1,1,1,\ldots,1,0,\ldots,0) 
    \end{align*}
    where the $k$-th coordinate of $v_k$ is $n$ and the $j$-th coordinate is zero for $1 \leq j \leq n-1$ and is one for $j \geq n,$ for each $1 \le k \le n-1,$ and the first $n-1$ coordinates of $v_n$ are zero.
    Then, we see that $v_n$ is strictly dominated by the mixed strategy
    $$ \tfrac{1}{n-1} v_1 + \tfrac{1}{n-1} v_2 + \cdots + \tfrac{1}{n-1} v_{n-1} = (\tfrac{n}{n-1}, \tfrac{n}{n-1}, \ldots, \tfrac{n}{n-1},1,\ldots,1) \gg (1,1,1,\ldots,1,0,\ldots,0), $$
    but $v_n$ is clearly not strictly dominated by a mixed strategy over any smaller set as each $v_k$ for $1 \le k \leq n-1$ is needed in the mixed strategy.

    Now, suppose that $|B| \leq |A|-1.$ Then, consider 
    \begin{align*}
        v_1 &= (2m,0,0,\ldots,0) \\
        v_2 &= (0,2m,0,\ldots,0) \\
        v_3 &= (0,0,2m,\ldots,0) \\
         & \ldots \\
        v_{m} &= (0,0,0,\ldots,2m) \\
        v_{m+1} &= (1,1,1,\ldots,1) \\
        v_{k} &= (0,0,0,\ldots,0) \, \, \, \forall m+1 < k \leq n 
    \end{align*}

    We see that $v_{m+1}$ is strictly dominated by the mixed strategy $$\tfrac{1}{m} v_1 + \tfrac{1}{m} v_2 + \cdots + \tfrac{1}{m} v_m = (2,2,2,\ldots,2) \gg (1,1,1,\ldots,1),$$
    but $v_m$ is clearly not strictly dominated by any mixed strategy over a set of $m-1$ or fewer actions, as $v_k$ for $k \geq m+2$ are all zero vectors, and each $v_k$ for $k \leq m$ is needed as $v_k$ is the only vector that has a positive $k$-th coordinate.

\end{proof}

\section{Appendix}

The appendix provides proofs of Caratheodory's and Radon's Theorems, as well as a new proof of $\textbf{Theorem 1.1}$ using the representation of dominance in section 2 and half-spaces.

We begin the proof of $\textbf{Theorem 1.1}$ with the following covering lemma.

\begin{lemma}
    \textbf{(Rotation covering)}
    Suppose two half spaces $a \cdot x < 0$ ($A$) and $b \cdot x < 0$ ($B$) together cover a compact convex set $S$ in $\mathbb{R}^n$. Then there exists $\lambda \in [0,1]$ such that the half space $C$ given by $(\lambda a + (1 - \lambda) b) \cdot x < 0$ also covers $S.$
\end{lemma}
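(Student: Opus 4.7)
The plan is to reduce the problem to a two-dimensional separation question by pushing $S$ forward through the linear map given by the two functionals. First, I would define $\phi : S \to \mathbb{R}^2$ by $\phi(x) = (a \cdot x,\, b \cdot x)$ and let $K = \phi(S)$. Since $\phi$ is linear (in particular affine and continuous) and $S$ is compact and convex, $K$ is a compact convex subset of $\mathbb{R}^2$. The covering hypothesis then translates exactly into the statement that $K$ is disjoint from the closed first quadrant $Q = [0, \infty) \times [0, \infty)$, because $x \in A \cup B$ if and only if $a \cdot x < 0$ or $b \cdot x < 0$, which is the same as $\phi(x) \notin Q$.

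Next, I would invoke the separating hyperplane theorem for the disjoint sets $K$ (compact convex) and $Q$ (closed convex). This produces a nonzero $c = (c_1, c_2) \in \mathbb{R}^2$ and some $\gamma \in \mathbb{R}$ with $c \cdot k < \gamma$ for every $k \in K$ and $c \cdot q \geq \gamma$ for every $q \in Q$, where the strict inequality on the $K$ side comes from compactness. I would then argue that both $c_1, c_2 \geq 0$: if, say, $c_1 < 0$, then $c \cdot (t, 0) = t c_1 \to -\infty$ as $t \to \infty$, contradicting the lower bound on $Q$. Evaluating at the origin $(0,0) \in Q$ gives $0 \geq \gamma$, so in fact $c \cdot k < 0$ for all $k \in K$.

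To finish, note that $(c_1, c_2) \neq (0,0)$ and both entries are nonnegative, so $c_1 + c_2 > 0$. Setting $\lambda = c_1/(c_1 + c_2) \in [0,1]$, we have $(c_1, c_2) = (c_1 + c_2)(\lambda,\, 1 - \lambda)$, and dividing the strict inequality $c \cdot \phi(x) < 0$ by the positive scalar $c_1 + c_2$ yields
\[
\lambda (a \cdot x) + (1 - \lambda)(b \cdot x) < 0 \quad \text{for all } x \in S,
\]
which is precisely the statement that the half-space $C$ covers $S$.

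The main obstacle is pinning down the sign structure of the separating hyperplane: the projection of $S$ into the $(a\cdot x, b\cdot x)$-plane is a clean move, but the crux is recognizing that separation from the first quadrant forces the normal vector into the nonnegative orthant, which is exactly what permits its normalized components to serve as the convex combination weights $\lambda$ and $1-\lambda$. Once this sign constraint is secured, the rest of the argument is a routine normalization.
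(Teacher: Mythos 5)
Your proof is correct, but it takes a genuinely different route from the paper's. The paper argues directly in $\mathbb{R}^n$: it assumes $a$ and $b$ are not parallel, observes that the codimension-$2$ subspace $P = \{a\cdot x = 0\}\cap\{b\cdot x = 0\}$ misses $S$, and then runs an intermediate-value/connectedness argument on the pencil of hyperplanes $C_\lambda$ as $\lambda$ sweeps from $0$ to $1$, deriving a contradiction from a segment joining points of $S_A$ and $S_B$ that would have to cross $P$ inside $S$. You instead push $S$ forward under $x \mapsto (a\cdot x,\, b\cdot x)$, note that the compact convex image is disjoint from the closed first quadrant, and strictly separate the two, reading $\lambda$ off the (necessarily nonnegative) normal vector. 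Your argument is shorter, fully rigorous, and handles uniformly the degenerate cases the paper sets aside by hand (parallel $a,b$; one half-space covering $S$ alone), since nothing in the projection-and-separation step depends on them. The one trade-off worth noting: the paper's appendix proof is positioned as an alternative to Bernheim's separating-hyperplane argument, and its lemma is deliberately proved by an elementary continuity argument; your proof reintroduces the separating hyperplane theorem, albeit only in $\mathbb{R}^2$, where separating a compact convex set from a disjoint closed convex set is entirely standard.
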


\begin{proof}
Suppose neither $A$ nor $B$ covers $S$ by itself so $\lambda \in (0, 1).$ Also assume that $a$ and $b$ are not parallel (linearly dependent) as that case can be easily handled.

Thus the hyperplanes $a \cdot x = 0$ and $b \cdot x = 0$ intersect at an $n-2$ dimensional subspace $P.$ Note that $P$ lies entirely outside of $S.$ 

These two hyperplanes divide $\mathbb{R}^n$ into four quadrant regions based on the signs of $a \cdot x$ and $b \cdot x.$
For the region where $a \cdot x < 0$ and $b \cdot x < 0$ clearly $(\lambda a + (1 - \lambda) b) \cdot x < 0.$ So consider the other two regions. The parts of $S$ in them are $S_A = (S \cap A) \setminus B$ and $S_B = (S \cap B) \setminus A$ which lie in two different quadrants. By assumption that neither $A$ nor $B$ alone cover $S,$ both $S_A, S_B$ are nonempty. Also $S_A, S_B$ are compact and convex. 

Consider the plane $C_{\lambda}$: $(\lambda a + (1 - \lambda) b) \cdot x = 0$ as $\lambda$ increases from zero to one. Note that $C_0$ intersects $S_A$ and $C_1$ intersects $S_B.$ Suppose for contradiction that $C_{\lambda}$ always intersects at least one of $S_A$ and $S_B.$ As $S_A, S_B$ are both closed, this means for some $\lambda^*$ the plane intersects both $S_A$ and $S_B,$ say at points $p$ and $q.$ 

We claim that the line segment $\overline{pq}$ intersects $P.$ Note that $C_{\lambda^*} \subset P \cup Q_1 \cup Q_2$ where $Q_1 = \{x : a \cdot x \geq 0, b \cdot x < 0\}$ and $Q_2 = \{x : b \cdot x \geq 0, a \cdot x < 0\}$ where we used $\lambda \in (0, 1).$ Since $p \in Q_1, q \in Q_2,$ as as we move from $p$ to $q$ along the segment, the signs of $a \cdot x$ and $b \cdot x$ each change (from nonnegative to negative, or vice versa). But this can only happen at $P$, which means that $\overline{pq}$ intersects $P$ at some point $z.$

However as $p, q \in S$, by convexity, this means $z \in S.$ This contradicts the fact that $P$ does not intersect $S.$
\end{proof}

\begin{thm}
\textbf{(Equivalence of rationalizability and dominance)}
    Given a finite, two player strategic game $G,$ the set of (pure) strategies that remain after iterated elimination of strictly dominated strategies are precisely the strategies that are rationalizable in $G.$
\end{thm}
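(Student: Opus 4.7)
The plan is to decompose the theorem into its single-round content and then iterate. The easy direction is immediate: if $a_i$ is strictly dominated by a mixed strategy in some subgame, then against every belief over the surviving opponent actions the dominating mixture does strictly better, so $a_i$ is not a best response, hence not rationalizable at that stage. Induction on rounds then gives that every eliminated strategy is non-rationalizable. What remains is the converse at a single round: in any subgame, if a pure strategy $a_i$ is not a best response to any belief over the opponent's remaining actions, then $a_i$ is strictly dominated by a mixed strategy over the other remaining own actions. Granted this single-round equivalence, the full theorem follows by a standard induction, since $k$-level rationalizability is exactly the single-round notion applied in the subgame that survives $k-1$ rounds of elimination.

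For the single-round direction I would set up the half-space picture from Section 2. Write $v_i \in \mathbb{R}^m$ for the payoff vector of $a_i$ and $v_1,\ldots,v_n$ for those of the other surviving own actions, where $m$ is the number of remaining opponent actions. Set $a_j = v_i - v_j$ and $h_j = \{q \in \mathbb{R}^m : a_j \cdot q < 0\}$. The hypothesis that no belief makes $a_i$ a best response is exactly the statement that the open half-spaces $h_1,\ldots,h_n$ cover the compact convex simplex $S^{m-1}$. The goal is to produce nonnegative weights $\lambda_j$ with $\sum_j \lambda_j = 1$ satisfying $\bigl(\sum_j \lambda_j a_j\bigr) \cdot q < 0$ for all $q \in S^{m-1}$, since rewriting this as $v_i \cdot q < \bigl(\sum_j \lambda_j v_j\bigr) \cdot q$ gives exactly strict dominance of $a_i$ by the mixed strategy $(\lambda_j)$.

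To produce such weights I plan to apply the rotation covering lemma iteratively. If $n \geq 2$, form $S_0 = S^{m-1} \cap \bigcap_{j=1}^{n-2} \{x : a_j \cdot x \geq 0\}$, which is compact and convex as the intersection of a compact convex set with closed half-spaces, and which $h_{n-1}$ and $h_n$ together cover because the full list covers $S^{m-1}$. The rotation covering lemma produces $\mu \in [0,1]$ such that the single half-space with normal $\mu a_{n-1} + (1-\mu) a_n$ covers $S_0$; adjoining it to $h_1,\ldots,h_{n-2}$ again covers $S^{m-1}$. Repeat this reduction $n-1$ times. A direct telescoping shows that at each step the new normal is a convex combination of the original $a_1,\ldots,a_n$ with coefficients summing to one, so the final single half-space has normal $\sum_j \lambda_j a_j$ with $\lambda_j \geq 0$ and $\sum_j \lambda_j = 1$, as required.

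The main obstacle I anticipate is keeping the hypotheses of the rotation covering lemma valid at every stage, specifically that the intermediate sets $S_k$ remain compact and convex and that the two half-spaces selected for combination still cover $S_k$. Both follow from the construction: $S_k$ is always a finite intersection of $S^{m-1}$ with closed half-spaces, and the covering property is preserved because the current collection always covers $S^{m-1}$, so whatever lies in $S_k$, which is precisely what the two remaining half-spaces must handle, is indeed covered by them. The substantive geometric work is therefore contained entirely in the rotation covering lemma itself, and the rest of the argument is bookkeeping on the convex combinations and a clean induction on elimination rounds.
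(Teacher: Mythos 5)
Your proposal is correct and follows essentially the same route as the paper's own proof: reduce to the single-round statement, translate ``never a best response'' into a covering of the belief simplex by open half-spaces, and then repeatedly merge two half-spaces via the rotation covering lemma applied to the compact convex set obtained by intersecting $S^{m-1}$ with the closed complements of the remaining half-spaces. Your treatment of the intermediate sets $S_k$ and the telescoping of the convex weights is, if anything, slightly more explicit than the paper's.
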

\begin{proof}
    It is easy to show that any strictly dominated strategy is not rationalizable, so we focus on the reverse implication. As before, let player 1 have strategies $\{a_1, \ldots, a_m\}$ and we denote by $S^n = \{x \in \mathbb{R}^n \mid x_1 + \cdots + x_n = 1, x_i \geq 0 \, \, \forall i\}$ the $n-1$ dimensional simplex of probability vectors where $n$ is the number of actions of player 2. Thus, a belief player 1 has about player 2 is some $q \in S^n.$ \\
    
    As in the framework of section 2, we can represent each pure strategy $a_j$ as a linear function $f_j : S^n \to \mathbb{R},$ whose image is a restricted hyperplane in $\mathbb{R}^{n+1}$. Suppose the pure strategy $a_i$ represented by $f_i = g$ is not rationalizable. Again by taking $f_j' = f_j - g$ for all $j$ we may assume that $g$ is identically $0$ on $S^n$. For a slight abuse of notation, in the following we write $a_i$ to mean the payoff vector $(a_{i,1}, \ldots, a_{i,n})$, so the expected payoff of playing the $i$-th action under belief $q$ is $f_j(q) = a_i \cdot q$. \\

    For a given $f_j$ the set of beliefs $q$ for which $a_j$ yields strictly higher payoff than $a_i$ are those that satisfy $f_j(q) = a_j \cdot q > 0$, ie. the intersection of an open half-space in $\mathbb{R}^n$ with $S^n.$ Let this half-space be $H_j$.  \\

    If $a_i$ is a never best response, then the union of these $H_j$ over all $j \neq i$ must cover $S^n$ entirely. Similarly, for a mixed strategy $\sigma = r_1 a_1 + \cdots + r_m a_m$, the beliefs for which $\sigma$ yields higher payoff than $a_i$ are those $q$ satisfying $(r_1 a_1 + \cdots + r_m a_m) \cdot q > 0$, which is also a half-space $H_{\sigma}$. Then, $\sigma$ strictly dominates $a_i$ iff the half-space $H_{\sigma}$ covers $S^n$. \\
    
    The proof of the theorem follows from $\textbf{Lemma 4.2}$, which allows us to repeatedly replace a pair of (pure or mixed) strategies with a single mixed strategy, such that the union of the remaining half-spaces still covers $S^n$. This reduces the number of strategies while maintaining the covering invariant.
    Eventually, the process terminates with a single half-space $H^*$ that covers $S^n$ by itself, which corresponds to a mixed strategy $\sigma$ that dominates $a_i$.

    Indeed, suppose we have $m-1$ half spaces $p_j \cdot x < 0 \,$ $(H_j)$ for $j \in [m] \setminus \{i\}$ in $\mathbb{R}^n$ each representing a pure strategy $a_j$, and their union covers the compact convex set of probability vectors $S^n$.  We can now take two half spaces $p_1 \cdot x < 0$ and $p_2 \cdot x < 0$ and since $H_1 \cup H_2$ covers the convex compact set $S^n \setminus (H_3 \cup \cdots \cup H_m)$, there exists some $\lambda$ such that the half space $(\lambda p_1 + (1 - \lambda) p_2) \cdot x < 0$ covers $S^n \setminus (H_3 \cup \cdots \cup H_m),$ and we can replace $H_1, H_2$ with this halfspace instead. Thus, we have reduced the total number of half spaces while still covering $S^n.$
    Iterating this process, we eventually have a single half space of the form $\lambda_1 a_1 + \cdots + \lambda_m a_m < 0$ which covers $S^n.$ Since we take convex combinations in each step, the weights are such that $\lambda_1+\cdots+\lambda_m = 1$ with all $\lambda_i \geq 0$, so this defines a mixed strategy $\sigma$ which strictly dominates $a_i$.

\end{proof}

\begin{thm}
\textbf{(Radon)}
Let $P$ be a set of $n \geq d+2$ distinct points in $\mathbb{R}^d.$ Then, $P$ can be partitioned into two nonempty subsets $V_1$ and $V_2$ such that the convex hulls of $V_1$ and $V_2$ intersect.
\end{thm}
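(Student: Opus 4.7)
The plan is to establish Radon's Theorem via the standard affine dependence argument, which leverages the fact that $n \geq d+2$ points in $\mathbb{R}^d$ always admit a nontrivial affine dependence. Roughly: we find a linear dependence among the points with coefficients summing to zero, then use the sign pattern of those coefficients to split $P$ into the two subsets whose convex hulls share a point.

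First, I would write $P = \{x_1, \ldots, x_n\}$ and consider the $n$ vectors $\tilde{x}_i = (x_i, 1) \in \mathbb{R}^{d+1}$ obtained by appending a $1$. Since $n \geq d+2 > d+1$, these vectors are linearly dependent in $\mathbb{R}^{d+1}$, so there exist scalars $\lambda_1, \ldots, \lambda_n$, not all zero, with $\sum_{i=1}^n \lambda_i \tilde{x}_i = 0$. Reading off the two blocks of coordinates, this is equivalent to the pair of conditions
\begin{equation*}
\sum_{i=1}^n \lambda_i x_i = 0 \quad \text{and} \quad \sum_{i=1}^n \lambda_i = 0.
\end{equation*}

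Next, I would exploit the sign pattern: since the $\lambda_i$ are not all zero and their sum is $0$, there exist indices with $\lambda_i > 0$ and indices with $\lambda_i < 0$. Set $I = \{i : \lambda_i > 0\}$ and $J = \{i : \lambda_i < 0\}$, and let $V_1 = \{x_i : i \in I\}$ and $V_2 = \{x_j : j \in J\}$ (if some $\lambda_i = 0$, those points may be placed into either subset arbitrarily to ensure we genuinely partition $P$). Both $V_1$ and $V_2$ are nonempty by construction. Define $S = \sum_{i \in I} \lambda_i = -\sum_{j \in J} \lambda_j > 0$. Rearranging the dependence relation yields
\begin{equation*}
\sum_{i \in I} \frac{\lambda_i}{S}\, x_i \;=\; \sum_{j \in J} \frac{-\lambda_j}{S}\, x_j,
\end{equation*}
and since $\lambda_i/S > 0$ for $i \in I$ sum to $1$, and similarly $-\lambda_j/S > 0$ for $j \in J$ sum to $1$, both sides are convex combinations. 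The common value is thus a point lying in both $H(V_1)$ and $H(V_2)$, which proves the claim.

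The only subtle point is the very first step, namely producing the affine dependence with coefficients summing to zero; once that is in hand, the partition and the common convex combination essentially write themselves. I would not expect any serious obstacle beyond carefully verifying that both index sets are nonempty (which follows from $\sum \lambda_i = 0$ together with the nontriviality of the dependence) and handling vanishing $\lambda_i$ so that $V_1$ and $V_2$ genuinely partition $P$.
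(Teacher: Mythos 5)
Your proposal is correct and follows essentially the same route as the paper's proof: obtain an affine dependence $\sum \lambda_i x_i = 0$, $\sum \lambda_i = 0$, and split the points by the sign of the coefficients so that the rescaled positive and negative parts give a common point of the two convex hulls. If anything, you are slightly more careful than the paper, since you justify the existence of the dependence via the lift to $\mathbb{R}^{d+1}$ and explicitly handle the zero coefficients and the nonemptiness of both parts.
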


\begin{proof}
Consider any set $X = \{x_1, \ldots, x_{d+2}\} \subset \mathbb{R}^d$ of $d+2$ points.

Then there exists multipliers $a_1, \ldots, a_{d+2}$, not all zero, such that:
$$\sum_{i=1}^{d+2} a_i x_i = 0, \, \, \, \, \, \, \, \, \, \, \sum_{i=1}^{d+2} a_i = 0.$$

Take some particular nontrivial solution $a_1, \ldots, a_{d+2}$. Let $Y \subseteq X$ be the set of points with positive multipliers and $Z = X \setminus Y$ be the set of points with negative or zero multiplier. Observe that $Y$ and $Z$ forms a partition of $X$ into two subsets with intersecting convex hulls.

Indeed, $H(Y)$ and $H(Z)$ necessarily intersect because they each contain the point $$p = \sum_{x_i \in Y} \frac{a_i}{A} x_i = \sum_{x_i \in Z} \frac{- a_i}{A} x_i,$$ 

where $A = \sum_{x_i \in Y} a_i = - \sum_{x_i \in Z} a_i$. So we have expressed $p$ as convex combination of points in $Y$ and also as a convex combination of points in $Z$, which means that $p$ lies in the intersection of the convex hulls.

\end{proof}

\begin{thm}
\textbf{(Carathéodory)}
If a point $x \in \mathbb{R}^d$ lies in the convex hull of a set $P,$ then $x$ can be written as the convex combination of at most d + 1 points in $P.$ i.e. there is a subset $P' \subset P$ consisting of $d + 1$ or fewer points such that $x \in H(P').$
\end{thm}

\begin{proof}
Suppose $x$ can be written as a convex combination of $k > d+1$ of the points $x_i \in P.$ Without loss of generality, they are $x_1,\ldots, x_k.$ Thus, we can write $x = \sum_{j=1}^{k} \lambda_j x_j,$ where $0 < \lambda_j \leq 1$ and $\sum_{j=1}^{k} \lambda_j = 1.$
\\
\newline
Then, consider $x_2 - x_1, \ldots, x_k-x_1,$ which is a linearly dependent set, so there exists $\mu_j$ for $2 \leq j \leq k$ not all zero such that $\sum_{j=2}^{k} \mu_j(x_j-x_1) = 0$ and not all are zero. Then, if we set $\mu_1 = - \sum_{j=2}^{k} \mu_j,$ this means that $\sum_{j=1}^{k} \mu_j x_j = 0$ as well as $\sum_{j=1}^{k} \mu_j = 0,$ and not all of the $\mu_j$ are equal to zero, so there exists some $\mu_j > 0.$
\\
\newline
Now let $\alpha = \min_{1 \leq j \leq k} \left\{ \frac{\lambda_j}{\mu_j} : \mu_j > 0 \right\} = \frac{\lambda_i}{\mu_i}.$ Note that $\alpha > 0.$ Then, we have \\
$x = \sum_{j=1}^{k} \lambda_j x_j - \alpha \sum_{j=1}^{k} \mu_j x_j = \sum_{j=1}^{k} (\lambda_j - \alpha \mu_j) x_j$.
Since $\alpha > 0,$ then $\lambda_j - \alpha \mu_j \geq 0$ for all $j.$ \\
Also, $\sum_{j=1}^{k} (\lambda_j - \alpha \mu_j) = \sum_{j=1}^{k} \lambda_j - \alpha \sum_{j=1}^{k} \mu_j = 1 - \alpha \cdot 0 = 1.$ \\
However, by definition, $\lambda_i - \alpha \mu_i = 0.$
Thus, $x = \sum_{1 \leq j \leq k, j\neq i} (\lambda_j - \alpha \mu_j) x_j,$ where every $\lambda_j - \alpha \mu_j$ is non-negative and their sum is one.
\\
\newline
In other words, $x$ is a convex combination of $k-1$ points in $P.$ Repeating this process, we can write $x$ as a convex combination of at most $d+1$ points in $P,$ as desired. 
\end{proof}
\bigskip

\begin{thm} \textbf{(Carathéodory's Theorem for Conical Hull)} \,
Let $n \geq d+1$ and $v_1,\ldots,v_n$ be $n$ distinct nonzero vectors in $\mathbb{R}^d.$ Consider the conical hull formed by this set.
Then, if a nonzero point $x$ is in the conical hull, $x$ can be written as the conical combination of at most $d$ of the $v_i.$

\end{thm}
\begin{proof}
Suppose $x$ is expressed as the conical combination of $k > d$ of the $v_i,$ without loss of generality, $v_1,\ldots,v_k.$ Then, $x = \sum_{j=1}^{k} \lambda_j v_j,$ where $\lambda_j > 0$ for all $1\leq j \leq k.$ Since $k > d,$ then $v_1,\ldots,v_k$ is linearly dependent, so there exists $\mu_j,$ not all equal to zero, such that $\sum_{j=1}^{k} \mu_j v_j = 0.$ By multiplying each $\mu_j$ by $-1$ if necessary, we can ensure that at least one of these $\mu_j$ must be positive.
Without loss of generality, we may also assume that $\sum_{j=1}^{k} \mu_j \geq 0$ because if the sum were less than zero (then there would exist some negative $\mu_j),$ we could multiply each $\mu_j$ by $-1.$
Now, note that for any $\alpha,$ we have $\sum_{j=1}^{k} (\lambda_j - \alpha \mu_j) v_j = \sum_{j=1}^{k} \lambda_j v_j - \alpha \sum_{j=1}^{k} \mu_j v_j = x.$ We want to choose some $\alpha$ such that $\lambda_i - \alpha \mu_i = 0$ for some $i,$ and all $\lambda_j - \alpha \mu_j \geq 0.$

Indeed, if we let $\alpha = \min_{1 \leq j \leq k} \left\{  \frac{\lambda_j}{\mu_j} : \mu_j > 0    \right\} = \frac{\lambda_i}{\mu_i} > 0,$ then we do have $\lambda_j - \alpha \mu_j \geq 0$ for all $j.$ If $\mu_j \leq 0,$ then since $\lambda_j > 0,$ we have $\lambda_j - \alpha \lambda_j \geq 0.$ And if $\mu_j > 0,$ we have $0 < \alpha = \frac{\lambda_i}{\mu_i} \leq \frac{\lambda_j}{\mu_j},$ and again $\lambda_j - \alpha \lambda_j \geq 0.$
Thus, $x = \sum_{j=1}^{k} (\lambda_j - \alpha \mu_j) v_j$ represents a conical combination using at most $k-1$ of the $v_i,$ as all $\lambda_j - \alpha \mu_j \geq 0,$ and $\lambda_i - \alpha \mu_i = 0.$
We can repeat this process until we are left with at most $d$ of the $v_i,$ and so $x$ is the conical combination of at most $d$ of the $v_i$ as desired.
\end{proof} 
\bigskip
\begin{cor}
Furthermore, if $x$ also lies in the convex hull $H(O,v_1,\ldots,v_n),$ i.e. $x$ is a convex combination of the $v_i$ (and $O = 0,$ so sum of coefficients of the $v_i$ is less than or equal to one), then $x$ can be written as the convex combination of at most $d$ of the $v_i,$ and $O.$
\end{cor}
\begin{proof}
If $x$ lies in the convex hull $H(O = v_0,v_1,\ldots,v_k)$, that means that $x = \sum_{j=0}^{k} \lambda_j v_j,$ where $\lambda_j > 0$ and $\sum_{j=0}^{k} \lambda_j = 1.$ Since $O = 0,$ this is the same as saying that $\sum_{j=1}^{k} \lambda_j \leq 1.$
Again, if $k > d,$ then we can write $\sum_{j=1}^{k} \mu_j v_j = 0,$ where we can assume that $\sum_{j=1}^{k} \mu_j \geq 0.$
But then in the preceding proof, we have that $\sum_{j=1}^{k} (\lambda_j - \alpha \mu_j) = \sum_{j=1}^{k} \lambda_j - \alpha \sum_{j=1}^{k} \mu_j \leq \sum_{j=1}^{k} \lambda_j \leq 1,$ as $\alpha > 0$ and $\sum_{j=1}^{k} \mu_j \geq 0.$ Again, recall that $\lambda_i - \alpha \mu_i = 0,$ and each of the $\lambda_j - \alpha \mu_j \geq 0.$ \\

Thus, $x$ lies in the convex hull of $H(v_0,v_1,\ldots,v_{i-1},v_{i+1},\ldots,v_k).$
Like before, by repeating this process, we can show that $x$ lies in the convex hull $H^{*}$ of at most $d$ of the nonzero $v_i,$ along with $v_0,$ i.e. $x \in H(v_0, v_{j_1}, \ldots, v_{j_d}).$ 
\end{proof}

\newpage 

\textbf{Acknowledgements}

I would like to thank Scott Gehlbach at the University of Chicago Political Science department (and Harris' Political Economy program) for teaching me game theory during my first year SSI Formal Theory class. Prof. Gehlbach introduced the concepts of iterated dominance and rationalizability to me, and this paper grew out of some observations I made from examples in his class. I would also like to thank various UChicago economics and computer science professors for kindly taking a look at the paper and giving feedback during Winter/Spring 2022. Roger Myerson also provided some important suggestions about improving the details of the paper, and the last proposition showing the bound is tight is partly motivated by his observation.

\newpage

\textbf{References} \\

[1] Bernheim, D. (1984) Rationalizable Strategic Behavior. Econometrica 52: 1007-1028. \\

[2] Pearce, D. (1984) Rationalizable Strategic Behavior and the Problem of Perfection. Econometrica 52: 1029-1050. \\

[3] Fudenberg, D. and Tirole, J. (1993) Game Theory. Cambridge: MIT Press. \\

[4] Yildiz, M. Rationalizability lecture notes. https://dspace.mit.edu/handle/1721.1/99213 \\

[5] Obara, I. Rationalizability and Iterated Elimination of Dominated Strategies \\
http://www.econ.ucla.edu/iobara/Rationalizability201B.pdf

\bigskip
\bigskip
\bigskip
\bigskip
\bigskip
\bigskip
\bigskip
\bigskip
\bigskip
\bigskip
\bigskip
\bigskip
\bigskip
\bigskip
\bigskip
\bigskip
\bigskip



\end{document}